\def\PICDIR{./}
\title{Computation-free Nonparametric testing for
Local and Global Spatial Autocorrelation with application to 
the Canadian Electorate}
\author{
Adam B Kashlak\\
Weicong Yuan\\
\small Mathematical \& Statistical Sciences\\
\small University of Alberta\\
\small Edmonton, Canada,  T6G 2G1}
\begin{document}

\maketitle

\begin{abstract}
  Measures of local and global spatial association are
  key tools for exploratory spatial data analysis.
  Many such measures exist
  including Moran's $I$, Geary's $C$, and the Getis-Ord 
  $G$ and $G^*$ statistics.  A parametric approach to 
  testing for significance relies on strong assumptions,
  which are often not met by real world data.  Alternatively,
  the most popular nonparametric approach, the permutation test,
  imposes a large computational burden especially for massive 
  graphical networks.  Hence, we propose a computation-free
  approach to nonparametric permutation testing for local 
  and global measures of spatial autocorrelation 
  stemming from generalizations of the Khintchine inequality
  from functional analysis and the theory of $L^p$ spaces.
  Our methodology is demonstrated on the results of the
  2019 federal Canadian election in the province of Alberta.
  We recorded the percentage of the vote gained by the 
  conservative candidate in each riding.
  This data is not normal, and the sample size is fixed at 
  $n=34$ ridings making the parametric approach invalid.
  In contrast, running a classic permutation test for 
  every riding, for multiple test statistics, with various 
  neighbourhood structures, and multiple testing correction
  would require the simulation of millions of permutations.
  We are able to achieve similar statistical power on this 
  dataset to the permutation test without the need for 
  tedious simulation.
  We also consider data simulated across the entire 
  electoral map of Canada.
\end{abstract}

\tableofcontents

\section{Introduction}

The 2019 Canadian federal election left the province of Alberta
an homogeneous sea of blue as the Conservative Party swept the 
entire province except for the small riding of Edmonton-Strathcona
captured by Heather McPherson of the New Democratic Party.
Is the province merely an highly homogeneous  mass of 
conservatism or are there more features to the political 
topology?  In this article, we answer this question by
proposing a novel nonparametric
approach to testing for local and global spatial autocorrelation 
via an analytic variant of the classic permutation test.

Global and local indicators of spatial association 
are a cornerstone of exploratory spatial data analysis
\citep{ANSELIN1995,ANSELIN2019}.  For a connected graph 
$\mathcal{G}$ with $n$ vertices $\nu_1,\ldots,\nu_n$, 
set of edges $\mathcal{E}$, and random variables $y_1,\ldots,y_n$
associated with each vertex, a global indicator of 
spatial association (GISA) tests whether the random vector 
$\boldsymbol{y}=(y_1,\ldots,y_n)$ is uncorrelated or 
has some non-negligible spatial autocorrelation.
Similarly, a local indicator of spatial association (LISA)
tests whether or not random variable $y_i$ at vertex $\nu_i$
is correlated with some user-defined local neighbourhood of $\nu_i$.

Many measures of GISA and LISA have been proposed including 
Moran's $I$, Geary's $C$, and the Getis-Ord statistics;
see, for example,  
\cite{CLIFFORD1981,SOKAL1998,WALLER2004,GETISORDb,GAETAN2010,SEYAS2020} 
and others for more details.
Two standard testing paradigms exist for these statistics:
asymptotic normality and permutation tests.  The former
suffers from strong distributional assumptions.  
Furthermore, the assumption that $n\rightarrow\infty$
is not valid in this context; Alberta has a fixed 
$n=34$ ridings (vertices) without hope for increasing the
sample size.  In turn, the permutation test offers a powerful
nonparametric testing alternative to asymptotic normality.
The permutation test simply computes the value of the chosen
test statistic under uniformly random permutations of the 
observations.
Its downfall stems from the computation required, because
as we cannot enumerate the entire set of $n!$ elements of
the symmetric group, we instead
randomly draw permutations to get a Monte Carlo estimate of
the p-value.  This results in the dual problems of heavy
computation and estimated p-values that are upwardly biased
causing a loss in statistical power.  
For example, if we were to test for local 
autocorrelation at one vertex, we may want to simulate 10,000
permutations to get an accurate estimate of the p-value.  
Repeating this for Canada's 338 ridings, would result in 
3.38 million permutations.  Furthermore, applying the Bonferroni 
multiple testing correction would warrant another, say, 
100x permutations per vertex requiring 338 million in total.
Repeating this test with different neighbourhood designations---e.g.
$k=1,2,3$ nearest neighbours---would further multiply the 
computational burden.   

Our approach follows from the permutation test, but instead of 
proceeding via tedious Monte Carlo simulation, we instead 
use analytic formulations of the permutation test from 
the recent works of 
\cite{SPEKTOR2016,KASHLAK_KHINTCHINE2020,SUSANNAORLI2020}.
This allows us to propose an analytic formula for
computation of the permutation test p-value.
The underlying thread tying together the many measures
of local and global autocorrelation is that 
all of these statistics can be considered as a special 
case of the Gamma index for matrix association
\citep{MANTEL1967,HUBERT1985} as noted in 
\cite{ANSELIN1995}.  Hence, we derive a general p-value
bound applicable to any statistic falling into the
form of a Gamma index in Theorem~\ref{thm:localGamma}
and specify it to cases of interest for LISA and GISA testing
in Sections~\ref{sec:lisaTest} and~\ref{sec:gisaTest},
respectively.
Prior to that, the test statistics of interest are 
briefly introduced in Section~\ref{sec:methods}, and
subsequently the
Canadian electorate data is analyzed in 
Section~\ref{sec:dataAnalysis} along with simulated 
data.

\section{Methods for testing spatial association}
\label{sec:methods}

\subsection{Local indicators of spatial association}

For a graph $\mathcal{G}$ with $n$ vertices $\nu_1,\ldots,\nu_n$ 
and real valued measurements
$y_1,\ldots,y_n\in\real$ at each vertex, we can define a few different 
measurements of LISA
based on a user specified $n\times n$ weight matrix $W$.
See \cite{ANSELIN1995,BIVAND2018,SEYAS2020} for more details.

Local Moran's index for node $i$ is defined as 
$$
  I_i = \frac{y_i-\bar{y}}{\hat{\sigma}^2}\sum_{j=1}^n w_{i,j}(y_j-\bar{y})
$$
where $w_{i,j}$ is the $i,j$th entry in the chosen weight matrix
$W$ and $\hat{\sigma}^2 = n^{-1}\sum_{i=1}^n(y_i-\bar{y})^2$ is the
sample variance of the $y_i$.
In \cite{SOKAL1998}, moments for local Moran's index are derived
under the total randomization hypothesis
being ``the one under which all permutations of the observed data 
values on the locations are equally likely.''
These moments are 
$$
  \xv I_i = -\frac{w_{i,(1)}}{n-1}
  ~\text{ and }~
  \var{I_i} = w_{i,(2)}\frac{n-b}{n-1} +
  (w_{i,(1)}^2-w_{i,(2)})\frac{2b-n}{(n-1)(n-2)} -
  \frac{w_{i,(1)}^2}{(n-1)^2}.
$$
where $w_{i,(1)}=\sum_{j=1}^nw_{i,j}$ and
$w_{i,(2)}=\sum_{j=1}^nw_{i,j}^2$ and
$b = n\sum_{i=1}^ny_i^4 / (\sum_{i=1}^ny_i^2)^2$.

Local Geary's statistic for node $i$ is defined as
$$
  C_i = \frac{1}{\hat{\sigma}^{2}}\sum_{j=1}^n w_{i,j}(y_i-y_j)^2
$$
with $\hat{\sigma}^{2}$ as above.  It it preferable to perform
a significance test for $C_i$ using a permutation test as opposed to 
parametric methods \citep{ANSELIN1995,ANSELIN2019,SEYAS2020}.
Regardless, the first and second moments under the total randomization
hypothesis are
$$
  \xv C_i = \frac{2nw_{i,(1)}}{n-1} 
  ~\text{ and }~
  \var{C_i} = \left(\frac{n}{n-1}\right)(w_{i,(1)}^2-w_{i,(2)})(3+b) -
  \left(\frac{2nw_{i,(1)}}{n-1}\right)^2
$$
as outlined in \cite{SOKAL1998}.

The Getis--Ord statistics for node $i$ are
$$
  G_i   = \frac{\sum_{j\ne i} w_{i,j} y_j}{\sum_{j\ne i}y_j},
  ~~~
  G_i^* = \frac{\sum_{j} w_{i,j} y_j}{\sum_{j}y_j}.
$$
with means and variances
\begin{align*}
  \xv G_i &= \frac{w_{i,(1)}}{n-1}, \text{ and }&
  \var{G_i} &= \frac{w_{i,(1)}(n-1-w_{i,(1)})\hat{\sigma}^2_{-i}}{
  	(n-1)^2(n-2)\bar{y}^2_{-i}
  },\\
  \xv G_i^* &= \frac{w_{i,(1)}}{n}, \text{ and }&
  \var{G_i^*} &= \frac{w_{i,(1)}(n-w_{i,(1)})\hat{\sigma}^2}{
	n^2(n-1)\bar{y}^2
  }
\end{align*}
with $\bar{y}_{-i}$ and $\hat{\sigma}^2_{-i}$ being the sample
mean and variance of $y_1,\ldots,y_n$ with the $i$th data point
removed.
 
All of these statistics have been thoroughly discussed in the
noted references.  Briefly, Moran's I is the closest analogue
to autocorrelation from a time series context, which can be 
positive or negative depending on how the neighbourhood values
deviate above or below the sample mean.  It will be near
zero, however, if the local measurements lie close to the sample
mean whereas Geary's C will deem such a setting to have strong
positive spatial association.  The Getis--Ord statistics act
like moving averages identifying local clusters that all 
exhibit large values, which are sometimes referred to 
as ``hot spots'' in the literature.  
These two statistics, $G$ and $G^*$, behave similarly
to Moran's I.
 
\subsection{Global indicators of spatial association}
 
The above LISA statistics naturally extend to 
GISA statistics through summation.  Though chronologically,
GISA statistics came first, and LISA statistics were designed
in the following additive form:
\begin{align*}
  I = \sum_{i=1}^n I_i,~
  C = \sum_{i=1}^n C_i,~
  G = \sum_{i=1}^n G_i,~
  G^* = \sum_{i=1}^n G_i^*.
\end{align*}
Consequently, we both have a global measure and an 
ANOVA-like decomposition of the global spatial association
into individual contributions from each of the $n$ nodes
in the graph.  Similar to the LISA statistics, significance
of these GISA statistics can be established via 
their moments and a normal approximation or via 
a permutation test.  
 
\subsection{Permutation tests for spatial association}

To perform a permutation test for a LISA statistic at 
node $\nu_i$, we apply a restricted permutation of the 
nodes $\pi\in\mathbb{S}_n$ such that $\pi(i)=i$.  Thus,
we fix the $i$th node and permute the others.  
For $B\in\natural$ permutations, 
$\pi_1,\ldots,\pi_B$, drawn uniformly at random 
from $\mathbb{S}_n$, the symmetric group on $n$ elements, 
that fix the $i$th entry, the upper-tail p-value is estimated 
to be 
$$
  \text{p-value} = \frac{1}{B+1}\left(
    1+\sum_{k = 1}^B \boldsymbol{1}\left[
      T_i(\pi_k) \ge T_i^*
    \right] 
  \right)
$$
where $T_i^*$ is the chosen statistic of interest---e.g. 
$I_i$ or $C_i$---and $T_i(\pi_k)$ is the value of that 
statistic computed after permuting the measurements 
$y_1,\ldots,y_n$ by $\pi_k$.  Similarly, the lower tail
can be computed by reversing the inequality.

We note that while Moran's $I$ and the Getis--Ord $G$ and $G^*$ 
are distinct statistics, in the context of a permutation 
test, they all yield the same inference.  This is because
the ordering of the $T_i(\pi_k)$ is preserved whether 
$y_j$ or $y_j-\bar{y}$ appears in the summand.

A permutation test for GISA statistics is applied similarly
to those for LISA statistics. 
One area of dispute is whether the randomization should
be restricted or unrestricted sometimes referred to as 
conditional or total randomization, respectively.
Mathematically, unrestricted randomization would 
consider uniformly random permutations in $\mathbb{S}_n$
while restricted randomization would consider uniformly
random permutations from $\mathbb{S}_n$ that fix $i$.
This is discussed in more detail in 
\cite{ANSELIN1995} and \cite{SOKAL1998} among others.
 In the theoretical development of Section~\ref{sec:theory},
 we consider the restricted randomization setting.
 
\section{Theory}
\label{sec:theory}

\subsection{Analytic permutation test for the Local Gamma Index}

The gamma index \citep{MANTEL1967,HUBERT1985} is a general measure of 
matrix association defined for two similar, say $n\times n$, 
matrices $A$ and $B$ with entries
$a_{i,j}$ and $b_{i,j}$, respectively, to be 
$
  \gamma_{AB} := \sum_{i,j=1}^na_{i,j}b_{i,j}
$
where we use the notation $\gamma$ instead of the more standard $\Gamma$ to avoid conflict
with the use of the gamma function below.  
Typically, the $a_{i,j}$ and $b_{i,j}$ can be treated as measures of proximity
between objects $i$ and $j$ resulting in $\gamma_{AB}$ being an unnormalized
measure of association (or correlation) between matrices $A$ and $B$.  
In \cite{HUBERT1985}, it is shown how the gamma index can be seen as a general
correlation measure, which includes many classic correlation statistics such 
as Pearson correlation, Spearman's $\rho$, and Kendall's $\tau$.  As a 
result, the gamma index is sometimes referred to as the general correlation 
coefficient.

The local gamma index introduced in \cite{ANSELIN1995} is a local version
of the above gamma index defined as 
$
  \gamma_i = \sum_{j=1}^n a_{i,j}b_{i,j}
$
where $A$ and $B$ are dropped for notational convenience.  We note that 
$\gamma = \sum_{i=1}^n \gamma_{i}$ thus decomposing the global gamma index
into a sum of local gamma indices reminiscent of ANOVA.  
For specific choices of 
$a_{i,j}$ and $b_{i,j}$, \cite{ANSELIN1995} shows that the  local gamma index 
can be specified to local Moran's, Geary's, and the Getis-Ord statistics
as well as others.  This is achieved by noting that each of these
statistics can be written as the gamma index between a weight
matrix $W$---e.g. the 
adjacency matrix---and a data association matrix $\Lambda$---e.g.
$\lmb_{i,j} = y_iy_j$.
Thus, we focus our theoretical development on the local gamma index.

In Theorem~\ref{thm:localGamma}, 
we develop analytic bounds on the permutation test statistic's 
p-value via 
application of a weakly dependent variant of the Khintchine inequality
\citep{HAAGERUP1981,GARLING2007,SPEKTOR2016,KASHLAK_KHINTCHINE2020,SUSANNAORLI2020}.
In what follows, $W$ is a binary weight matrix---i.e. $w_{i,j}\in\{0,1\}$---with 
diagonal entries of zero.  Such $W$ include the adjacency matrix for the 
graph $\mathcal{G}$ as well as the $k$-nearest-neighbours matrix where 
$w_{i,j}=1$ if there exists a path from $\nu_i$ to $\nu_j$ of length
no greater than $k$.
In Theorem~\ref{thm:localGamma}, 
we require the below low-connectivity condition on the weight matrix,
which is 
reasonable for large planar graphs as considered in the data from 
Section~\ref{sec:dataAnalysis}.   However, this condition 
can also be reversed as is discussed in Remark~\ref{rmk:highConnect}.
\begin{condition}[No highly connected vertices]
	\label{cond:noConVert}
	For each row $i$ of W, $\sum_{j=1}^nw_{i,j} \le n/2$.
\end{condition} 

\begin{theorem}[Local Gamma Index]
  \label{thm:localGamma}
  For a graph $\mathcal{G}$ with $n$ vertices, let $W$ be a binary-valued $n\times n$
  weight matrix with zero diagonal, and let $\Lambda$ be an $n\times n$ matrix 
  with entries $\lmb_{i,j} = \lmb( y_i, y_j )$ with $\lmb:\real^2\rightarrow\real$ 
  being a measure of proximity---e.g. $\lmb( y_i, y_j )=(y_i-\bar{y})(y_j-\bar{y})$ 
  for Moran or $\lmb( y_i, y_j )=(y_i-y_j)^2$ for Geary.  The local gamma index between 
  $W$ and $\Lambda$ at vertex $i$ is $\gamma_{i} = \sum_{j=1}^n w_{i,j}\lmb( y_i, y_j )$
  and the permuted variant of this test statistic is 
  $
    \gamma_{i}(\pi) = \sum_{j=1}^n w_{i,j}\lmb( y_i, y_{\pi(j)} )
  $
  where $\pi$ is a uniformly random element of $\mathbb{S}_{n}$ conditioned 
  so that $\pi(i)=i$.  Then, for vertex $i$ under Condition~\ref{cond:noConVert} 
  denoting $m_i = \sum_{j=1}^n w_{i,j}$, 
  $\bar{\lmb}_{-i} = (n-1)^{-1}\sum_{j\ne i}\lmb_{i,j}$,  and 
  $s_i^2=(n-1)^{-1}\sum_{j\ne i}(\lmb_{i,j}-\bar{\lmb}_{-i})^2$,
  \begin{equation}
  \label{eqn:LGBound}
    \prob{
      \abs{\gamma_i{(\pi)}-m_i\bar{\lmb}_{-i}} \ge \gamma_i \,|\,
      y_1,\ldots,y_n
    } \le 
    \exp\left( 
    -\frac{m_i\gamma_i^2}{2s_i^2(n-m_i-1)^2}
    \right)
  \end{equation}
  Furthermore, 
  \begin{equation}
  \label{eqn:LGBetaBound}
    \prob{ \abs{\gamma_i{(\pi)}-m_i\bar{\lmb}_{-i}} \ge \gamma_i \,|\,
    	y_1,\ldots,y_n } \le 
    C_0 I\left[
      \exp\left( 
      -\frac{m_i\gamma_i^2}{2s_i^2(n-m_i-1)^2}
      \right);
      \frac{(n-1)(n-m_i-1)}{m_i^2},\frac{1}{2}
    \right]
  \end{equation}
  where $I[\cdot]$ is the regularized incomplete beta function and
  $$
  C_0 = \frac{
  	{\sqrt{(n-1)(n-m_i-1)}}\Gamma\left(\frac{(n-1)(n-m_i-1)}{m_i^2}\right)
  }{
  	m_i\Gamma\left(\frac{1}{2}+\frac{(n-1)(n-m_i-1)}{m_i^2}\right)
  }
  $$
  with $\Gamma(\cdot)$ the gamma function.  
\end{theorem}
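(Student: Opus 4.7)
The plan is to reduce $\gamma_{i}(\pi) - m_i\bar{\lmb}_{-i}$ to a centred sampling-without-replacement statistic and then apply the weakly-dependent Khintchine-type inequality of \cite{SPEKTOR2016,KASHLAK_KHINTCHINE2020,SUSANNAORLI2020}. Since $w_{i,i}=0$ and $\pi(i)=i$, one may write $\gamma_{i}(\pi) = \sum_{k \ne i} T_k \lmb_{i,k}$ with $T_k := \boldsymbol{1}[k \in \pi(J_i)]$ and $J_i := \{j \ne i : w_{i,j}=1\}$. Because $\pi$ is uniform over permutations fixing $i$, the set $\{k : T_k = 1\}$ is a uniform random $m_i$-subset of $\{1,\ldots,n\}\setminus\{i\}$. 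Setting $p := m_i/(n-1)$ and using $\sum_{k \ne i}(T_k - p) = 0$, the permuted statistic rewrites as the centred hypergeometric-type sum
\begin{equation*}
  \gamma_{i}(\pi) - m_i\bar{\lmb}_{-i} \;=\; \sum_{k \ne i}(T_k - p)(\lmb_{i,k} - \bar{\lmb}_{-i}),
\end{equation*}
where the deviations satisfy $\sum_{k \ne i}(\lmb_{i,k}-\bar{\lmb}_{-i})^2 = (n-1)s_i^2$.

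I would then apply the weakly-dependent Khintchine inequality from the cited works to bound $\|\gamma_{i}(\pi) - m_i\bar{\lmb}_{-i}\|_{L^{2p}}$ by $\sqrt{p}$ times a geometric scale factor times $s_i$. The centred random weights $T_k - p$ are a sum-constrained analogue of Rademachers, and Condition~\ref{cond:noConVert} places us in the regime $m_i \le n/2$ where those bounds are sharp; tracking the combinatorial constants produces an effective sub-Gaussian scale $(n-m_i-1)/\sqrt{m_i}$, which is precisely what is needed. A standard Chernoff-type optimisation over $p \in \natural$ applied to the moment bound then collapses the $\sqrt{p}$ factor and yields the sub-Gaussian tail bound~(\ref{eqn:LGBound}). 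For the sharper bound~(\ref{eqn:LGBetaBound}), instead of optimising over $p$ one substitutes the \emph{exact} gamma-function constants of the Khintchine inequality --- these arise as moments of a half-Gaussian/chi distribution --- and sums the resulting moment series; the prefactor $C_0$ and the shape parameters $((n-1)(n-m_i-1)/m_i^2,\,1/2)$ of the regularised incomplete beta function emerge directly from this gamma-function arithmetic, exactly as in the refined Khintchine bounds of~\cite{KASHLAK_KHINTCHINE2020}.

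The main obstacle will be the careful bookkeeping of constants: the weakly-dependent Khintchine inequalities in the cited works are stated in several different normalisations (for exchangeable Rademachers, for permutations of a vector, and in Orlicz-norm form), so specialising them to the centred hypergeometric-type sum above requires a change of variables that recovers precisely the factor $(n-m_i-1)^2/m_i$ in the exponent and the exact beta-function parameters rather than a slightly looser surrogate. Condition~\ref{cond:noConVert} is essential at this step --- when $m_i > n/2$ one must instead work with the complementary sample of size $n-1-m_i$, giving the role-reversed variant indicated in Remark~\ref{rmk:highConnect}.
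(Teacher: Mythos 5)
Your proposal is correct and follows essentially the same route as the paper: your centred weights $T_k - p$ with $p = m_i/(n-1)$ are exactly the paper's imbalanced Rademacher weights $\delta_{i,j}$ up to the affine rescaling by $m_i(n-m_i-1)/(n-1)$, and the subsequent steps (the weakly dependent Khintchine moment bound with Chernoff optimisation under Condition~\ref{cond:noConVert}, then the gamma-function/incomplete-beta refinement) are precisely Theorem~2.1 and Proposition~2.5 of \cite{KASHLAK_KHINTCHINE2020}, which the paper invokes directly. The only difference is presentational --- you permute the indicator weights while the paper permutes the $\lmb$ values, which are equivalent in distribution.
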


\begin{remark}[Highly connected vertex]
  \label{rmk:highConnect}
  In the proof of Theorem~\ref{thm:localGamma}, the assumption that 
  $m_i \le n-m_i-1$ from Condition~\ref{cond:noConVert} is used.  If
  the converse were true, then the proof can be rerun by swapping
  the roles of $m_i$ and $n-m_i-1$.  The resulting bounds
  are
  $$
    \prob{
    	\abs{\gamma_i{(\pi)}-m_i\bar{\lmb}_{-i}} \ge \gamma_i \,|\,
    	y_1,\ldots,y_n
    } \le 
    \exp\left( 
      -\frac{(n-m_i-1)\gamma_i^2}{2s_i^2m_i^2}
    \right)
  $$
  and 
  $$
  \prob{ \abs{\gamma_i{(\pi)}-m_i\bar{\lmb}_{-i}} \ge \gamma_i \,|\,
  	y_1,\ldots,y_n } \le
  C_0 I\left[
  \exp\left( 
    -\frac{(n-m_i-1)\gamma_i^2}{2s_i^2m_i^2}
  \right);
  \frac{(n-1)m_i}{(n-m_i-1)^2},\frac{1}{2}
  \right]
  $$
  with
  $$
  C_0 = \frac{
  	{\sqrt{(n-1)m_i}}\Gamma\left(\frac{(n-1)m_i}{(n-m_i-1)^2}\right)
  }{
  	(n-m_i-1)\Gamma\left(\frac{1}{2}+\frac{(n-1)m_i}{(n-m_i-1)^2}\right)
  }.
  $$
\end{remark}

The necessity of having a binary-valued weight matrix arises
from the proof of Theorem~\ref{thm:localGamma}, which reframes
testing for significant spatial association as a two sample 
test---i.e. the 0's and the 1's delineate two samples to 
compare.  Though, unlike the 
two sample tests discussed in \cite{KASHLAK_KHINTCHINE2020},
we typically have many more 0-weights than 1-weights as 
indicated in Condition~\ref{cond:noConVert}.  Hence,
the first bound in Equation~\ref{eqn:LGBound} is mathematically
valid but excessively
conservative in practice.  The beta-corrected bound in 
Equation~\ref{eqn:LGBetaBound} rectifies this problem
as demonstrated with both real and simulated data in 
Section~\ref{sec:dataAnalysis}.  As an alternative
to this beta-correction, \cite{KASHLAK_KHINTCHINE2020}
also proposes an empirical correction based on performing
a small number of permutations to estimate the 
parameters for the incomplete beta function.  We 
further note in the simulations in Section~\ref{sec:dataAnalysis}
that this approach is inferior to our formula
presented in Equation~\ref{eqn:LGBetaBound}.

\subsection{Extensions to testing LISA}
\label{sec:lisaTest}

By directly applying Theorem~\ref{thm:localGamma} to Moran's $I$
and Geary's $C$, we have the below corollaries. 
In practice, one should use Equation~\ref{eqn:LGBetaBound}
for significance testing.  Nevertheless, the following 
sub-Gaussian bounds give intuition regarding the behaviour 
of these LISA statistics.

\begin{corollary}[Moran's Statistic]
  \label{cor:moranLisa}
  For $I_i = \frac{y_i-\bar{y}}{\hat{\sigma}^2}\sum_{j=1}^nw_{i,j}(y_j-\bar{y})$,
  the permutation test p-value is bounded by
  $$
    \prob{
    	\abs{I_i{(\pi)}-m_i\bar{I}_{-i}} \ge I_i \,|\,
    	y_1,\ldots,y_n
    } \le 
    \exp\left( 
      -\frac{m_iI_i^2}{2(n-m_i-1)^2}
      \left(\frac{\hat{\sigma}^4}{s_i^2}\right)
    \right)
  $$
\end{corollary}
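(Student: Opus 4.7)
The plan is to obtain the corollary as a direct specialization of Theorem~\ref{thm:localGamma}. First I would take the proximity function to be $\lmb(y_i,y_j)=(y_i-\bar{y})(y_j-\bar{y})$, which is the canonical choice identified in the theorem's statement for Moran's case. With this $\lmb$, the local gamma index at vertex $i$ becomes
$$
  \gamma_i = (y_i-\bar{y})\sum_{j=1}^n w_{i,j}(y_j-\bar{y}) = \hat{\sigma}^2 I_i,
$$
and, because any permutation $\pi$ acts only on the labels $y_1,\ldots,y_n$ while $\hat{\sigma}^2$ is a symmetric function of those labels, the same scaling yields $\gamma_i(\pi)=\hat{\sigma}^2 I_i(\pi)$.

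Second, I would reconcile the centering terms. Theorem~\ref{thm:localGamma} uses $m_i\bar{\lmb}_{-i}$ with $\bar{\lmb}_{-i}=(n-1)^{-1}\sum_{j\ne i}\lmb_{i,j}$. Defining $\bar{I}_{-i}:=\bar{\lmb}_{-i}/\hat{\sigma}^2$ aligns the two notations, and dividing the event $\{|\gamma_i(\pi)-m_i\bar{\lmb}_{-i}|\ge\gamma_i\}$ through by the strictly positive constant $\hat{\sigma}^2$ yields the event $\{|I_i(\pi)-m_i\bar{I}_{-i}|\ge I_i\}$ appearing in the corollary.

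Finally, substituting $\gamma_i=\hat{\sigma}^2 I_i$ into the exponent of Equation~\ref{eqn:LGBound} converts the factor $\gamma_i^2$ into $\hat{\sigma}^4 I_i^2$, which rearranges to the claimed bound after factoring out $\hat{\sigma}^4/s_i^2$. There is no substantive obstacle beyond notational bookkeeping: one must simply remember that $s_i^2$ in the corollary inherits its definition from Theorem~\ref{thm:localGamma} applied to the \emph{unnormalized} products $\lmb_{i,j}=(y_i-\bar{y})(y_j-\bar{y})$, which is exactly why the factor $\hat{\sigma}^4$ appears explicitly in the numerator rather than being absorbed into $s_i^2$.
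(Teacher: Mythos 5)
Your proposal is correct and follows exactly the route the paper takes: the paper gives no separate proof of Corollary~\ref{cor:moranLisa}, stating only that it follows ``by directly applying Theorem~\ref{thm:localGamma}'' with $\lmb(y_i,y_j)=(y_i-\bar{y})(y_j-\bar{y})$, which is precisely your substitution $\gamma_i=\hat{\sigma}^2 I_i$ and the resulting rescaling of the exponent by $\hat{\sigma}^4$. Your additional care in defining $\bar{I}_{-i}=\bar{\lmb}_{-i}/\hat{\sigma}^2$ and in noting that $s_i^2$ is computed from the unnormalized products is consistent with the paper's own discussion of the ratio $\hat{\sigma}^4/s_i^2$ following the corollaries.
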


\begin{corollary}[Geary's Statistic]
  \label{cor:gearyLisa}
  For $C_i = \frac{1}{\hat{\sigma}^2}\sum_{j=1}^nw_{i,j}(y_i-y_j)^2$,
  the permutation test p-value is bounded by
  $$
    \prob{
      \abs{C_i{(\pi)}-m_i\bar{C}_{-i}} \ge C_i \,|\,
      y_1,\ldots,y_n
    } \le 
    \exp\left( 
      -\frac{m_iC_i^2}{2(n-m_i-1)^2}
      \left(\frac{\hat{\sigma}^4}{s_i^2}\right)
    \right)
  $$
\end{corollary}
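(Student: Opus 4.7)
The plan is to recognize Geary's $C_i$ as a rescaled local gamma index and directly invoke Theorem~\ref{thm:localGamma}. Specifically, choose the proximity measure $\lambda(y_i, y_j) = (y_i - y_j)^2$ (that is, without the $\hat{\sigma}^2$ normalization), so that the unnormalized local gamma index is $\tilde{\gamma}_i := \sum_{j=1}^n w_{i,j}(y_i - y_j)^2 = \hat{\sigma}^2 C_i$. The pivotal structural fact making the rescaling painless is that $\hat{\sigma}^2 = n^{-1}\sum_k(y_k - \bar{y})^2$ depends only on the multiset $\{y_1, \ldots, y_n\}$, and is therefore invariant under every permutation $\pi$; in particular $\hat{\sigma}^2(\pi) = \hat{\sigma}^2$, so it can be treated as a positive deterministic constant and factored through the probability.

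With this choice, let $s_i^2$ denote the sample variance (over $j \ne i$) of the raw values $(y_i - y_j)^2$, which is exactly the quantity produced by the theorem. Applying Theorem~\ref{thm:localGamma} yields a sub-Gaussian bound on $\prob{\abs{\tilde{\gamma}_i(\pi) - m_i \bar{\lambda}_{-i}} \ge \tilde{\gamma}_i \,|\, y_1, \ldots, y_n}$ with exponent $-m_i \tilde{\gamma}_i^2 / (2 s_i^2 (n - m_i - 1)^2)$. Dividing the event through by $\hat{\sigma}^2 > 0$ converts it to $\abs{C_i(\pi) - m_i \bar{C}_{-i}} \ge C_i$ (using $\bar{\lambda}_{-i} = \hat{\sigma}^2 \bar{C}_{-i}$), while substituting $\tilde{\gamma}_i = \hat{\sigma}^2 C_i$ in the numerator of the exponent produces the compensating factor $\hat{\sigma}^4/s_i^2$ in the stated bound.

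No real obstacle is anticipated: the argument is algebraic bookkeeping around a carefully chosen rescaling. The only point that requires attention is the interpretation of $s_i^2$ in the corollary, which must refer to the variance of the \emph{unnormalized} proximities $(y_i - y_j)^2$ rather than the normalized ones; this convention is forced by the compensating $\hat{\sigma}^4$ factor sitting in the exponent. The analogous derivation for Corollary~\ref{cor:moranLisa} proceeds identically with $\lambda(y_i, y_j) = (y_i - \bar{y})(y_j - \bar{y})$ replacing $(y_i - y_j)^2$, since $(y_i - \bar{y})$ is also fixed under any permutation that fixes the $i$th coordinate.
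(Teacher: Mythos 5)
Your proposal is correct and matches the paper's (implicit) derivation: the corollary is obtained by directly applying Theorem~\ref{thm:localGamma} with $\lmb(y_i,y_j)=(y_i-y_j)^2$ and exploiting the permutation-invariance of $\hat{\sigma}^2$ to rescale, which is exactly how the paper presents it, including your observation that $s_i^2$ must be computed from the unnormalized proximities so that the $\hat{\sigma}^4$ factor appears in the exponent.
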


We note that the only difference between these corollaries and
Theorem~\ref{thm:localGamma} is the inclusion of a factor of 
$\hat{\sigma}^4$ in the numerator.  In the case of Moran's I
after centring so that $\bar{\lmb}_{-i}=0$,
the ratio with $s_i^2$ becomes
$$
  \frac{\hat{\sigma}^4}{s_i^2} =
  (n-1)^{-1}\frac{
  	\sum_{i,j=1}^n (y_i-\bar{y})^2(y_j-\bar{y})^2
  }{
    (y_i-\bar{y})^2\sum_{j=1,j\ne i}^n (y_j-\bar{y})^2
  }.
$$
As the double sum in the numerator is over $n^2$ terms, the sum in 
the denominator can be thought of summing along the $i$th row
of these values.  Hence, the p-value becomes smaller if the 
$i$th row of entries has lower variance than the average row 
variance and vice versa.  In the case of Geary's C after 
centring about $\bar{\lmb}_{-i}$, we have  
$s_i^2 = (n-1)^{-1}\sum_{j=1,j\ne i}^n(y_i-y_j)^4$, 
which yields a similar intuition.

\begin{remark}[Getis-Ord Statistics]
  As discussed in Section~\ref{sec:methods} and in 
  more detail in \cite{ANSELIN1995},
  a conditional/restricted permutation test on 
  local Moran's index will give an identical
  empirical reference distribution to a permutation test on either 
  of the Getis-Ord $G$ or $G^*$ statistics.  This is because for
  vertex $\nu_i$, we permute with $\pi\in\mathbb{S}_n$ such that 
  $\pi(i)=i$---that is, the permutation is a conditional randomization
  that fixes $y_i$---and this permutation only modifies the term
  $\sum_{j=1}^nw_{i,j}y_{\pi(j)}$.
\end{remark}

\subsection{Extensions to testing GISA}
\label{sec:gisaTest}

Global indicators can be defined as scaled sums of
corresponding local indicators as noted in Section~\ref{sec:methods} 
and in \cite{ANSELIN1995}.  To extend Theorem~\ref{thm:localGamma}
for LISA to Theorem~\ref{thm:globalGamma} for GISA, we must 
first define $\mathbb{S}_n^{\otimes n}$ to be the direct product
of $n$ copies of the symmetric group $\mathbb{S}_n$, which itself
satisfies the axioms of a group.

\begin{theorem}[Global Gamma Index]
	\label{thm:globalGamma}
	For a graph $\mathcal{G}$ with $n$ vertices, let $W$ be a binary-valued $n\times n$
	weight matrix with zero diagonal, and let $\Lambda$ be an $n\times n$ matrix 
	with entries $\lmb_{i,j} = \lmb( y_i, y_j )$ with $\lmb:\real^2\rightarrow\real$ 
	being a measure of proximity---e.g. $\lmb( y_i, y_j )=(y_i-\bar{y})(y_j-\bar{y})$ 
	for Moran or $\lmb( y_i, y_j )=(y_i-y_j)^2$ for Geary.  The gamma index between 
	$W$ and $\Lambda$ is 
	$\gamma = \sum_{i=1}^n \gamma_i =  
	\sum_{i=1}^n\sum_{j=1}^n w_{i,j}\lmb( y_i, y_j )$
	and the permuted variant of this test statistic is 
	$
	\gamma(\boldsymbol{\pi}) = \sum_{i=1}^n \gamma_i(\pi_i)
	$
	where $\boldsymbol{\pi}=(\pi_1,\ldots,\pi_n)$ 
	is a uniformly random element of $\mathbb{S}_{n}^{\otimes n}$ conditioned 
	so that $\pi_i(i)=i$.  Then, 
	denoting $m_i = \sum_{j=1}^n w_{i,j}$, 
	$\bar{\lmb}_{-i} = (n-1)^{-1}\sum_{j\ne i}\lmb_{i,j}$,  
	$\eta_i = m_i(n-m_i-1)/n-1$,
	and	$\upsilon^2=\sum_{i=1}^n \eta_i s_i^2$,
	$$
	\prob{ 
	  \abs*{\gamma(\boldsymbol{\pi})- \sum_{i=1}^n m_i\bar{\lmb}_{-i} } \ge \gamma 
	  \,\mid\, y_1,\ldots,y_n
    }
	\le
	\frac{1}{\sqrt{\pi}}\Gamma\left(
	\frac{\gamma^2}{4\upsilon^2}
	;\frac{1}{2}
	\right) + O(2^{-2n})
	$$
	with $\Gamma(\cdot;\cdot)$ the upper incomplete gamma function.  
\end{theorem}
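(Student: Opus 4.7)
The plan is to exploit the product structure in the definition of $\boldsymbol{\pi}$: since $(\pi_1,\ldots,\pi_n)$ is uniform on $\mathbb{S}_n^{\otimes n}$ conditioned only on the constraints $\pi_i(i)=i$ (which act on separate coordinates), the components $\pi_i$ are mutually independent. Consequently the centred summands $Z_i := \gamma_i(\pi_i) - m_i \bar{\lmb}_{-i}$ are independent random variables, and the global deviation $\gamma(\boldsymbol{\pi}) - \sum_i m_i\bar{\lmb}_{-i}$ is simply $\sum_i Z_i$. This reduces the theorem to a concentration bound for a sum of $n$ independent summands whose individual tail behaviour has already been pinned down by Theorem~\ref{thm:localGamma}.

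First I would revisit the proof of Theorem~\ref{thm:localGamma} not just to quote its stated tail bound but to extract the underlying moment/MGF control on each $Z_i$. The relevant object is the variance $\eta_i s_i^2$, which is (up to a negligible finite-population correction) the exact variance of the sampling-without-replacement sum $\gamma_i(\pi_i)$; the Khintchine-type estimates of \cite{SPEKTOR2016,KASHLAK_KHINTCHINE2020,SUSANNAORLI2020} driving the beta-corrected bound~\eqref{eqn:LGBetaBound} show that the moments of $Z_i$ match those of a centred Gaussian with variance $\eta_i s_i^2$ up to a multiplicative factor that decays geometrically with the moment order. By independence, cumulants and MGFs add across $i$, yielding a Gaussian-like envelope for $\sum_i Z_i$ with aggregated variance proxy $\sum_i \eta_i s_i^2 = \upsilon^2$.

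Next I would run a Chernoff-type optimisation on this aggregated MGF and recognise the resulting two-sided Gaussian tail using the identity $\mathrm{erfc}(x) = \pi^{-1/2}\Gamma(x^2;\tfrac12)$; matching the exponent with $x = \gamma/(2\upsilon)$ produces exactly the leading term $\pi^{-1/2}\Gamma(\gamma^2/(4\upsilon^2);\tfrac12)$ of the bound. The additive $O(2^{-2n})$ remainder is then attributed to the accumulated discrepancy between each local MGF and its Gaussian surrogate: the Khintchine refinement gives a $(1+O(4^{-1}))$-type per-coordinate slack which, after telescoping through $n$ independent factors and combining with the combinatorial tail corrections in the same references, decays as $2^{-2n}$.

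The main obstacle will be the control of this cumulative error in the last step. The naive sub-Gaussian variance proxy coming from~\eqref{eqn:LGBound} overshoots the true variance $\eta_i s_i^2$ by a factor of order $(n-1)(n-m_i-1)/m_i^2$, so simply summing sub-Gaussian parameters would give a much weaker bound than claimed; one must instead propagate the sharper beta-corrected moment information through the product of MGFs and verify that the resulting error is genuinely geometric in $n$ rather than polynomial. Threading this refinement uniformly in the Chernoff parameter $t$, while keeping the incomplete-gamma form clean, is the delicate step; the rest of the argument is a reasonably direct application of independence and the already-established local machinery.
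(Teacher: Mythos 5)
Your setup matches the paper's: you exploit the independence of the $\pi_i$ to reduce the problem to a sum of independent local statistics, you identify $\upsilon^2=\sum_i\eta_i s_i^2$ as the correct aggregate variance, and you correctly diagnose the central obstacle, namely that the sub-Gaussian proxy coming from the Khintchine moment bounds overshoots the true variance by a factor of order $(n-1)(n-m_i-1)/m_i^2$ per coordinate. However, your proposed resolution of that obstacle does not work as stated, and it is precisely where the paper's proof does all of its real work. First, there is no per-coordinate moment bound that ``matches a centred Gaussian with variance $\eta_i s_i^2$ up to a geometrically decaying factor'': the only available control (Theorem A.4 of \cite{KASHLAK_KHINTCHINE2020}) bounds $\xv[T_i^{k}]$ in terms of the inflated proxy $(n-m_i-1)^2s_i^2/m_i$, with half-integer gamma factors $\Gamma(k/2+1)$ in the denominator. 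If a moment comparison at the true variance were available, the beta correction in Theorem~\ref{thm:localGamma} itself would be unnecessary. Second, a Chernoff optimisation of an MGF bound can only ever produce an $\exp(-t^2/2\sigma^2)$-type tail; it cannot produce the $\mathrm{erfc}$ (equivalently, upper incomplete gamma) leading term $\pi^{-1/2}\Gamma(\gamma^2/4\upsilon^2;\tfrac12)$, so the claimed ``recognition'' of the Gaussian tail is a non sequitur.

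The paper's actual route has three ingredients absent from your plan. (i) Because the local moment bounds carry $\Gamma(k_i/2+1)$ rather than $\Gamma(k_i+1)$, the multinomial expansion of $\xv T(\boldsymbol{\pi})^p$ cannot be resummed by the multinomial theorem; Lemmas~\ref{lem:halfBinom} and~\ref{lem:multiSum} (a hypergeometric-function argument splitting the composition simplex into parity classes) show the sum is at most $2^{n-1}(\sum_i \eta_i^2 s_i^2/m_i^3)^{p/2}$, which after symmetrization and Chernoff yields the crude bound $\exp(-t^2/2^{n+2}\varpi^2)$. (ii) A single global application of the incomplete-beta transform (Proposition~2.5 of \cite{KASHLAK_KHINTCHINE2020}) converts this into $C_0\,I(u;2^n\varpi^2/\upsilon^2,\tfrac12)$, injecting the true variance $\upsilon^2$ through the exact computation $\var{T(\boldsymbol{\pi})}=\sum_i\eta_i s_i^2$. (iii) Since the beta parameter $2^n\varpi^2/\upsilon^2$ is astronomically large, the asymptotic expansion of \cite{DOMAN1996} is applied; its leading term is exactly $Q(\gamma^2/4\upsilon^2;\tfrac12)$ and its second term is of order $2^{-2n}$, which is the true source of the remainder, not any telescoped per-coordinate MGF slack. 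Without steps (ii) and (iii) you cannot obtain the incomplete-gamma form at the scale $\upsilon^2$, and without step (i) you have no rigorous aggregate moment bound to start from.
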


\section{Data Analysis}
\label{sec:dataAnalysis}

\subsection{Simulated Data}

\subsubsection{Local Statistics}

Before delving into the results of the 2019 Canadian election,
we use the map of Canada's 338 ridings to simulate independent
data to verify correct performance of our methodology under the
null hypothesis of no spatial autocorrelation.  Hence, we 
simulate 338-long vectors of iid random variates coming from
both the standard normal distribution and the exponential 
distribution
with rate parameter set to 1.  For each of the 30 replications, 
we produce 338 p-values using Theorem~\ref{thm:localGamma}
with the incomplete beta function transformation for both 
Moran's and Geary's statistic.  The weight matrix $W$
is chosen to be the graph adjacency matrix.

The results of these four simulations are displayed in 
Figure~\ref{fig:canSim} in the form of QQ-plots 
charting the 338 ordered p-values against the expected
quantiles.  In all cases, these empirical values do not
deviate significantly from the main diagonal.  
In Table~\ref{tab:simCan}, we compare the performance of
Theorem~\ref{thm:localGamma} using the incomplete beta 
transform to three other methods for producing p-values: Theorem~\ref{thm:localGamma} using the empirical adjustment
mentioned in Section~\ref{sec:theory};
the standard computation-based permutation test using 1000 
permutations at each riding; and approximation of the test
statistic using the normal distribution based on the means 
and variances detailed in \cite{SOKAL1998}.
This is done by using the Anderson--Darling goodness of fit
test to test for uniformity of the 338 null p-values in 
each of the 30 replications.  Table~\ref{tab:simCan}
tabulates how many of these 30 replications are rejected 
as not uniform by the Anderson--Darling test at the 1\%
level.  
We see that the p-values produced by our methodology 
more often appear uniform than either the computation-based
permutation test or approaching our methodology via an empirical
adjustment.  Lastly, all 30 simulations producing p-values based
on Z-scores are rejected indicating that the normal approximation
for the distribution of either Moran's or Geary's statistic 
is not valid in this setting.

\begin{table}
  \begin{center}
  	\begin{tabular}{llrrrr}
  		\hline
  		&& \multicolumn{4}{c}{\bf Number of Anderson-Darling Rejections}\\
  		&& Beta Adjusted & Emp Adjusted & Computed Perms & Z Score \\
  		\hline
  		\multirow{ 2}{*}{Moran} &
  		Gaussian    & 2 & 14 & 8 & 30 \\
  		&Exponential & 3 & 8  & 16& 30 \\
  		\multirow{ 2}{*}{Geary} &
  		Gaussian &    0 & 12 &  5& 30 \\
  		&Exponential & 6 &  8 & 12& 30\\
  		\hline
  	\end{tabular}
  \end{center}
  \caption{
  	\label{tab:simCan}
  	The number of simulated null data sets whose set of 338
  	p-values are rejected by the Anderson--Darling goodness of 
  	fit test for uniformity at the 1\% level.  
  	30 replicates were performed for 
  	each case.
  }
\end{table}

\begin{figure}
  \begin{center}
    \includegraphics[width=0.45\textwidth]{\PICDIR/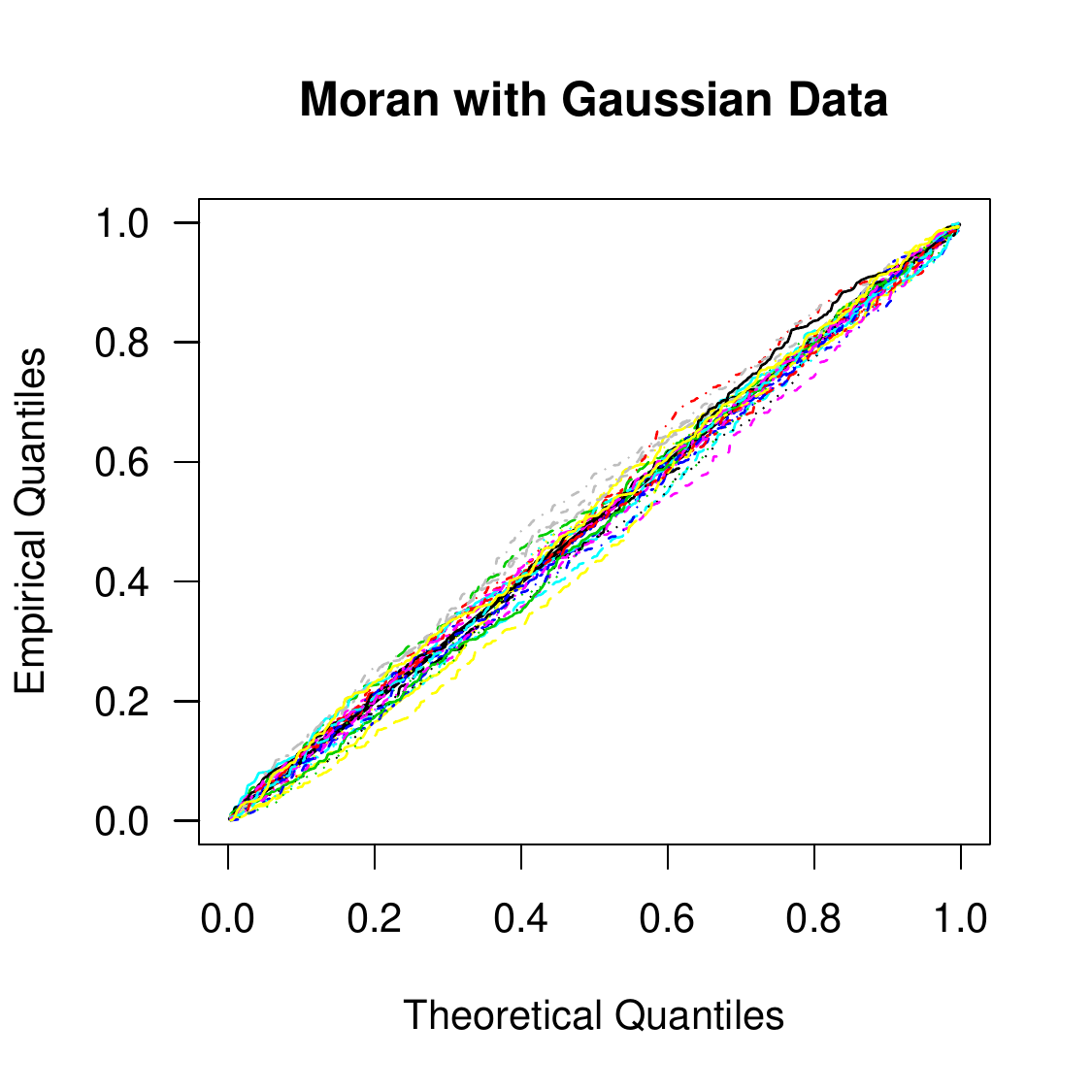}   
    \includegraphics[width=0.45\textwidth]{\PICDIR/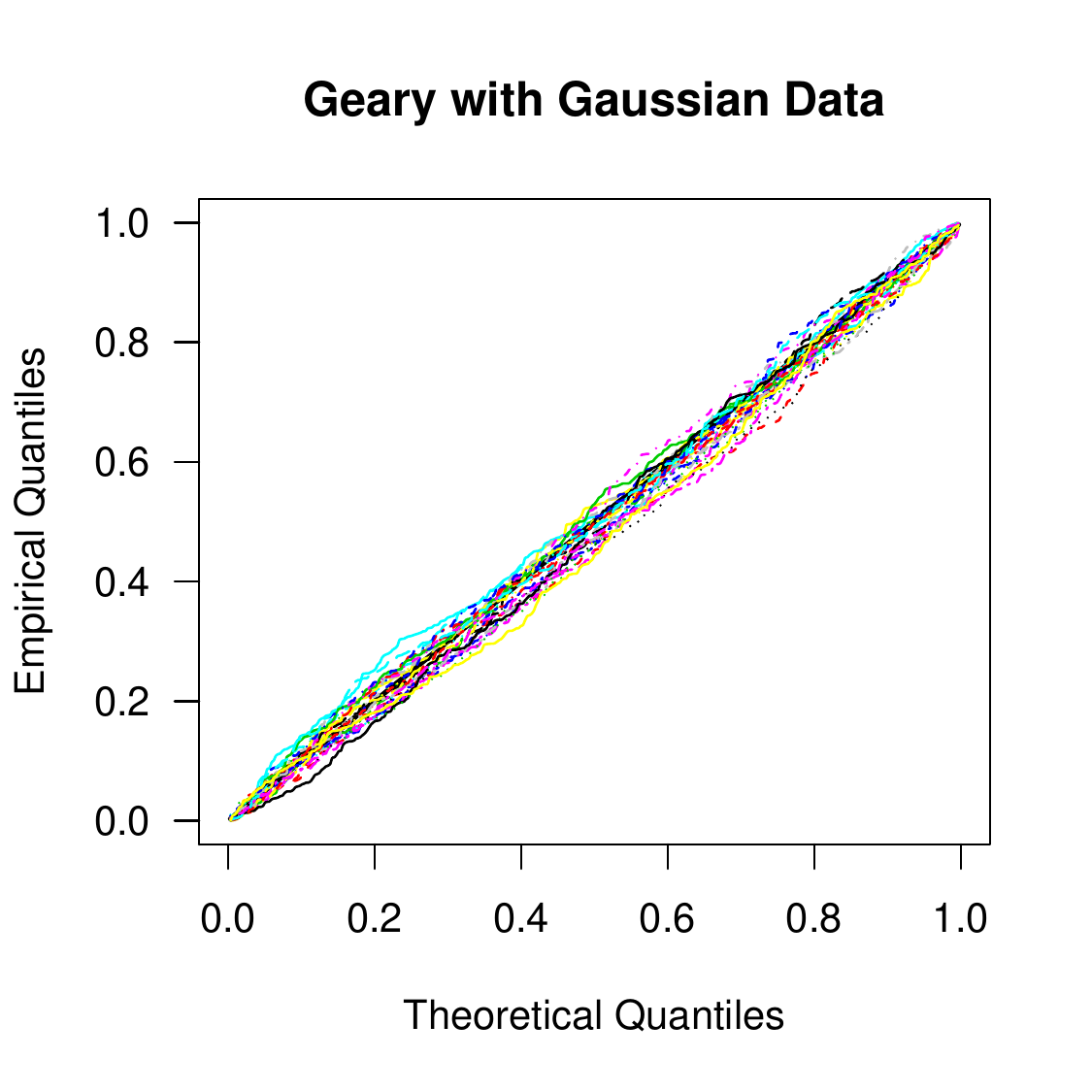}
    \includegraphics[width=0.45\textwidth]{\PICDIR/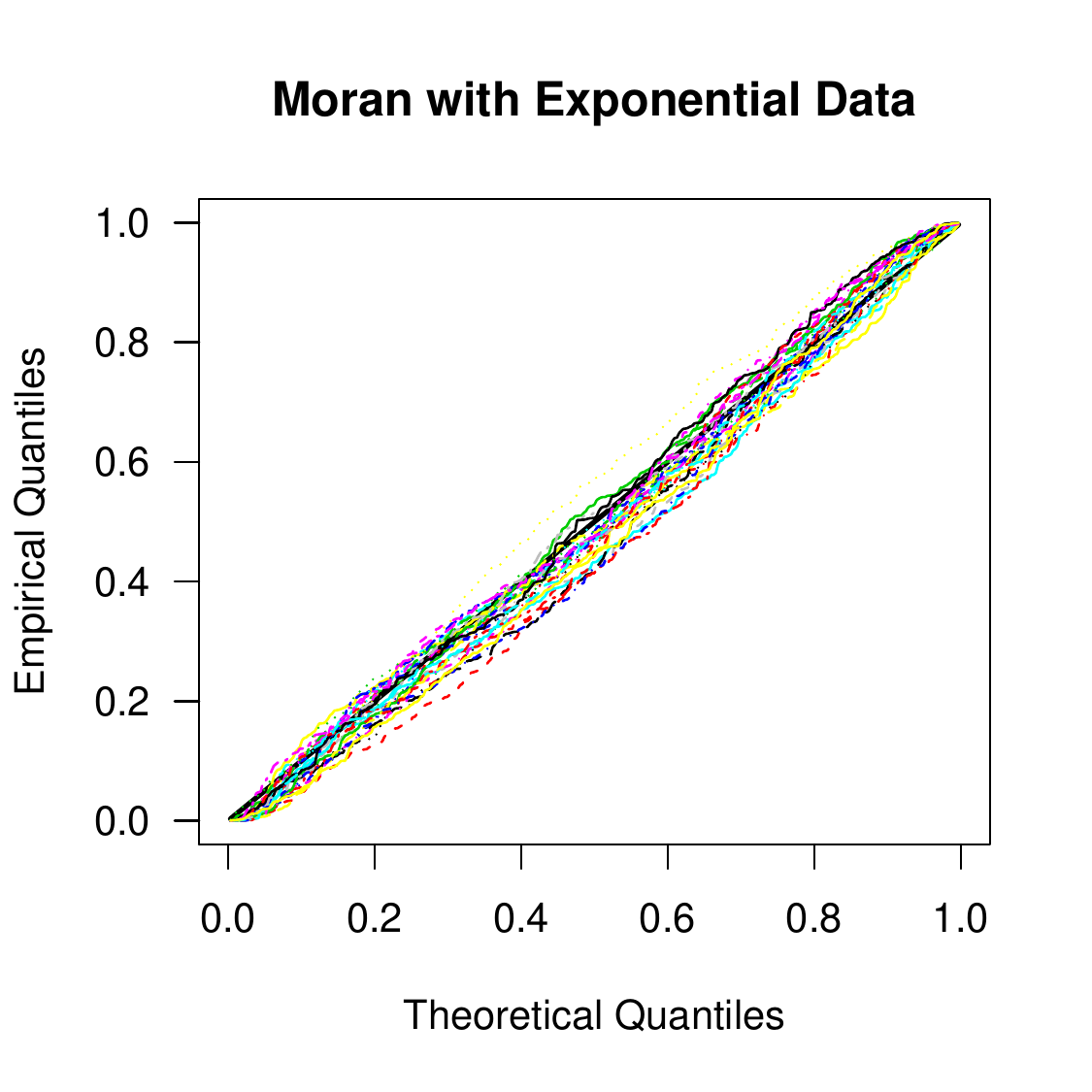}
    \includegraphics[width=0.45\textwidth]{\PICDIR/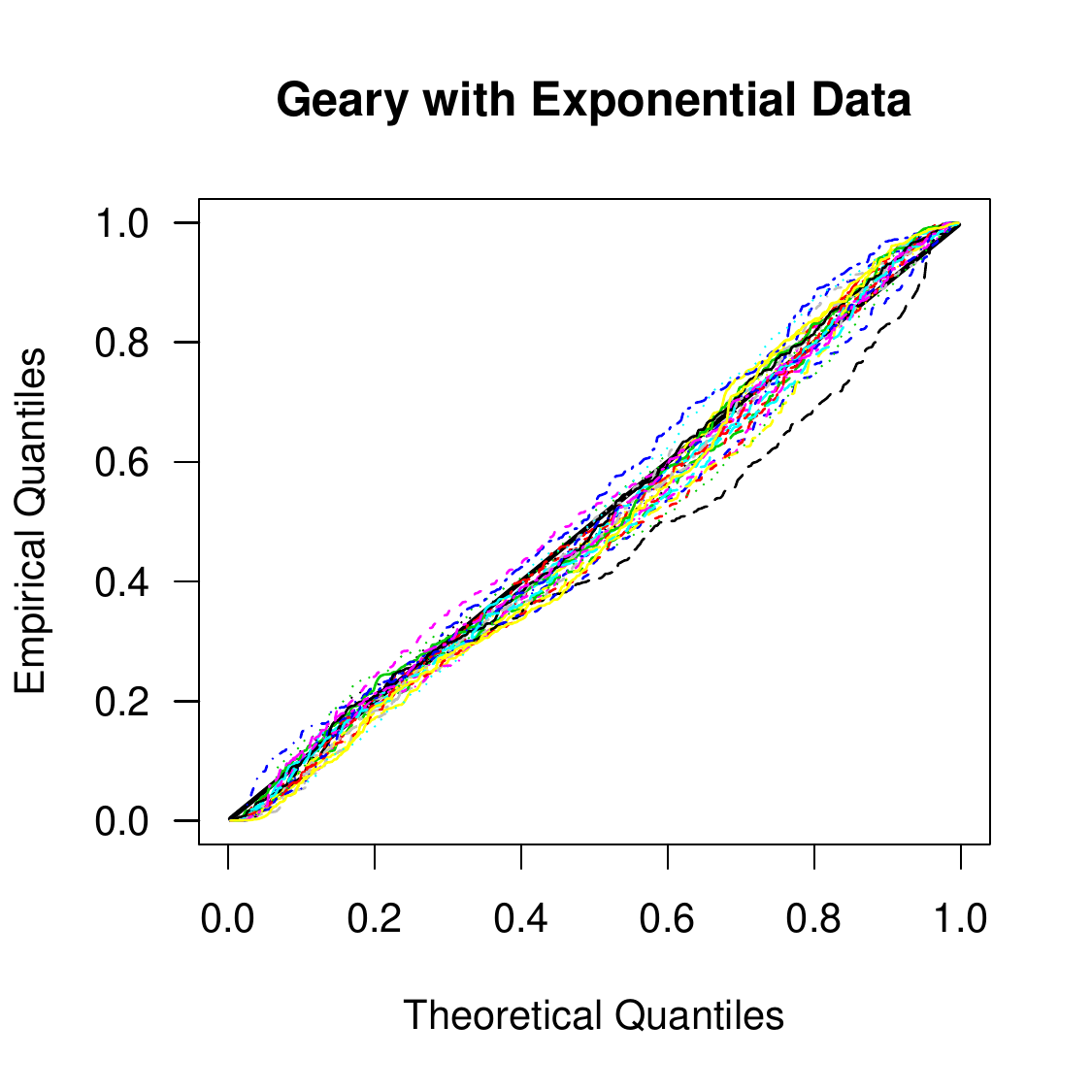}
  \end{center}
  \caption{
  	\label{fig:canSim}
    These plots depict 30 replicates of ordered p-values for Moran's 
  	and Geary's statistic and for independent Gaussian and 
  	exponential data simulated on the entire map of Canada's 
  	338 ridings.  This indicates that 
  	Theorem~\ref{thm:localGamma} produces p-values as would
  	be expected in the null setting of independence.
  }
\end{figure}

\subsubsection{Global Statistics}

We also test the performance of Theorem~\ref{thm:globalGamma}
in the null setting by simulating iid Gaussian and exponential 
data on the entire map of Canada.  Figure~\ref{fig:glbSim}
displays the results of 400 replications of each of the 
four settings.  For Moran's statistic in both cases and
Geary's statistic for Gaussian data, the distribution of
the 400 p-values is uniform as desired both visually 
and via the Kolmogorov-Smirnov and Anderson-Darling tests.
In the case of Geary with exponential data, the p-values 
produced by Theorem~\ref{thm:globalGamma} under-report
the significance---i.e. the p-values are larger than 
they should be.  This is corrected via the following 
empirical beta transform detailed in 
Algorithm~\ref{algo:empBetaAlg}, which is
similar to the one proposed in \cite{KASHLAK_KHINTCHINE2020},
but modified to handle GISA statistics.

\begin{algorithm}
	\caption{
		\label{algo:empBetaAlg}
		The Empirical Beta Transform for GISA Statistics
	}
	\begin{tabbing}
		\qquad \enspace Compute p-value 
		$p_0 = \frac{1}{\sqrt{\pi}}\Gamma(
		\frac{\gamma^2}{4\upsilon^2}
		;\frac{1}{2}
		)$ based on $\gamma$ chosen from the 
		desired GISA statistic.\\
		\qquad \enspace Choose $r>1$, the number of permutations 
		to simulate---e.g. $r=10$.\\
		\qquad \enspace Draw $\boldsymbol{\pi}_1,\ldots,
		\boldsymbol{\pi}_r$ from 
		$\mathbb{S}_n^{\otimes n}$
		uniformly at random under the condition that 
		$\boldsymbol{\pi}_{i,j}(j)=j$.\\
		\qquad \enspace  Compute $r$ p-values by
		$p_i=\frac{1}{\sqrt{\pi}}\Gamma(
		\frac{\gamma(\boldsymbol{\pi}_i)^2}{4\upsilon^2}
		;\frac{1}{2}
		)$ .\\ 
		\qquad \enspace Find the method of moments estimator
		for $\alpha$ and $\beta$ from the beta distribution.  \\
		\qquad\qquad Estimate first and second central moments of the 
		$p_i$ by $\bar{p}$ and $s^2$,\\ 
		\qquad\qquad\enspace the sample mean and variance.\\ 
		\qquad\qquad Estimate 
		$\hat{\alpha} = {\bar{p}^2(1-\bar{p})}/{s^2}-\bar{p}$.\\
		\qquad\qquad Estimate
		$\hat{\beta} = [ \bar{p}(1-\bar{p})/s^2 - 1 ][ 1-\bar{p} ]$.\\
		\qquad \enspace Return the adjusted p-value
		$I(p_0; \hat{\alpha},\hat{\beta})$.
		
	\end{tabbing}
\end{algorithm}

\begin{figure}
	\begin{center}
		\includegraphics[width=0.45\textwidth]{\PICDIR/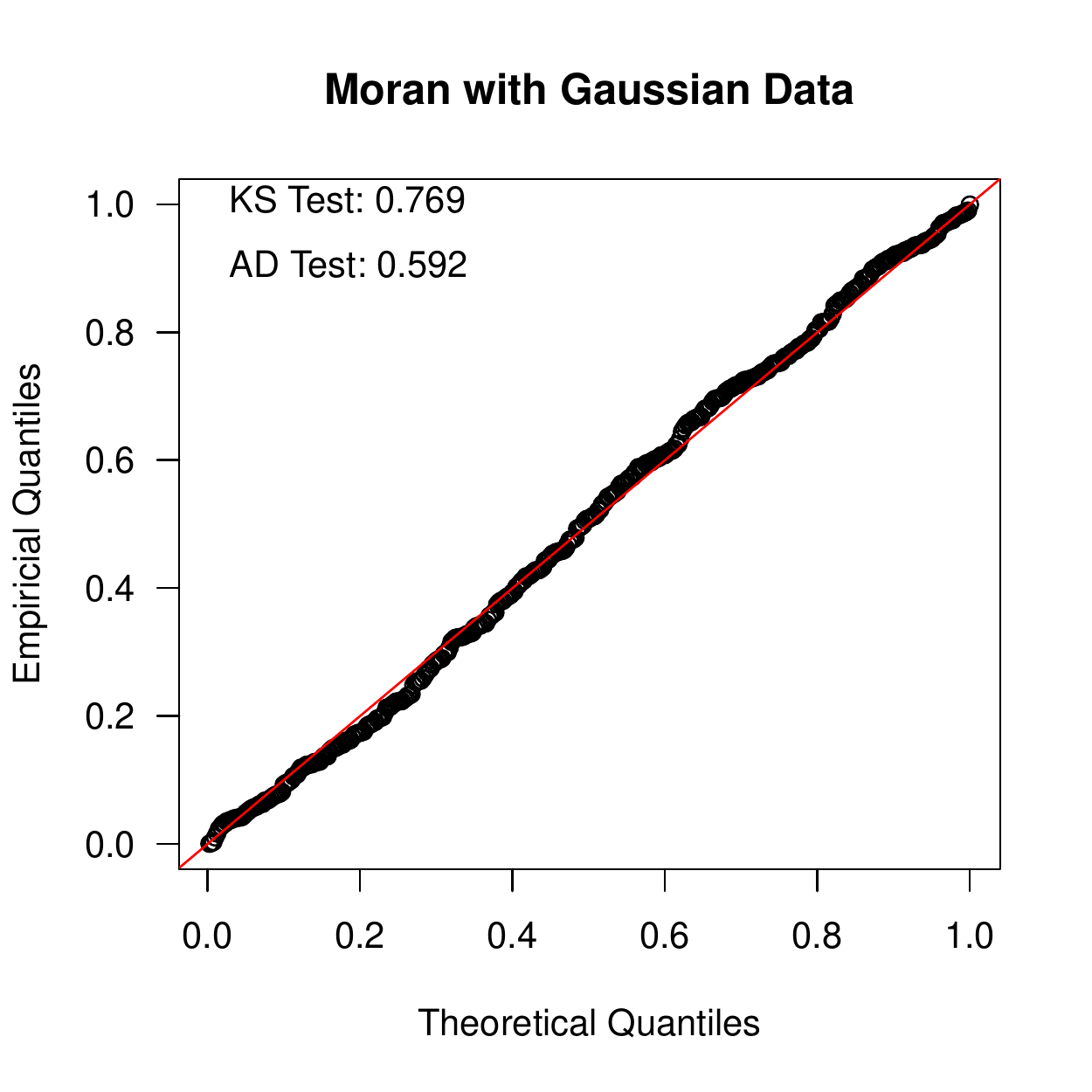} 
		\includegraphics[width=0.45\textwidth]{\PICDIR/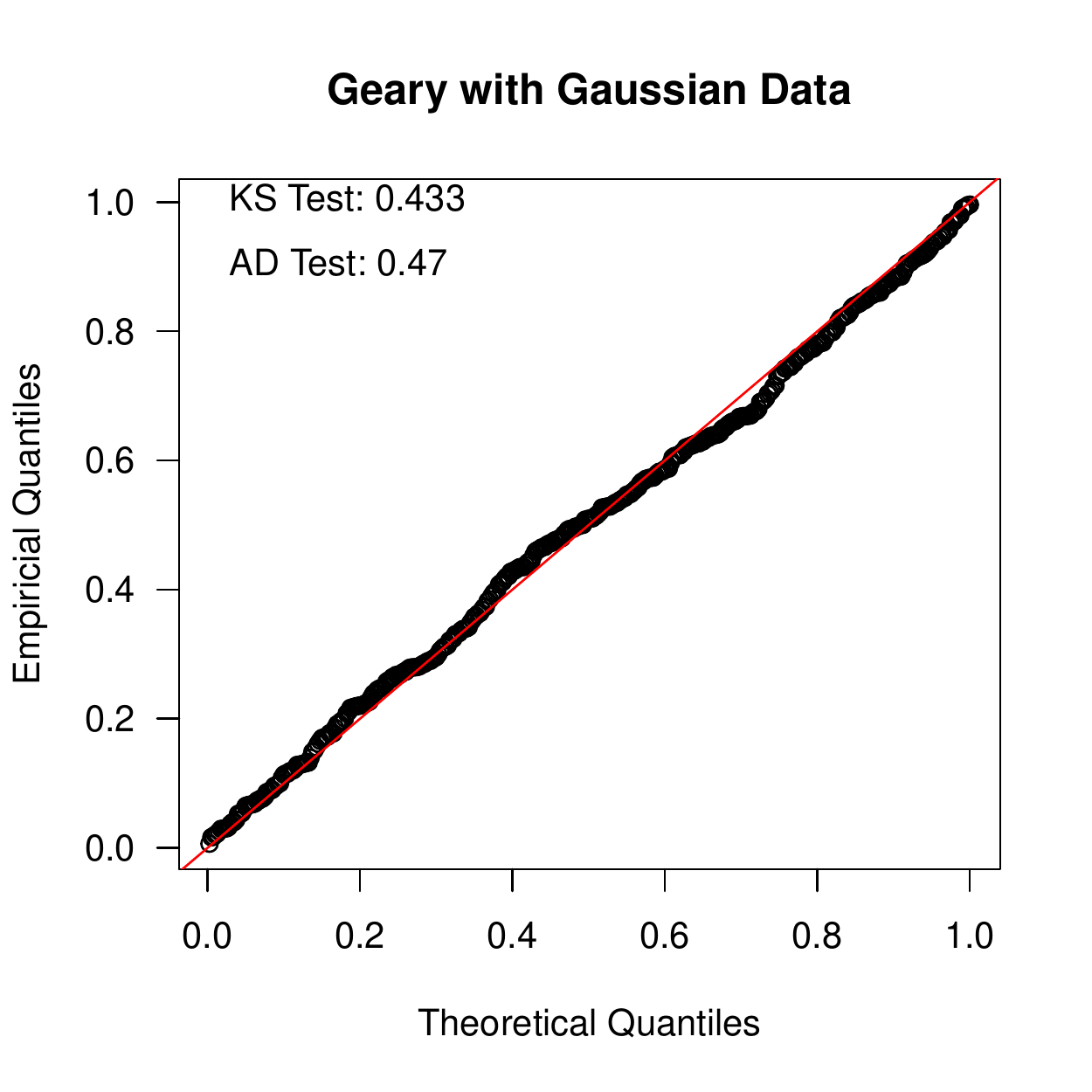}
		\includegraphics[width=0.45\textwidth]{\PICDIR/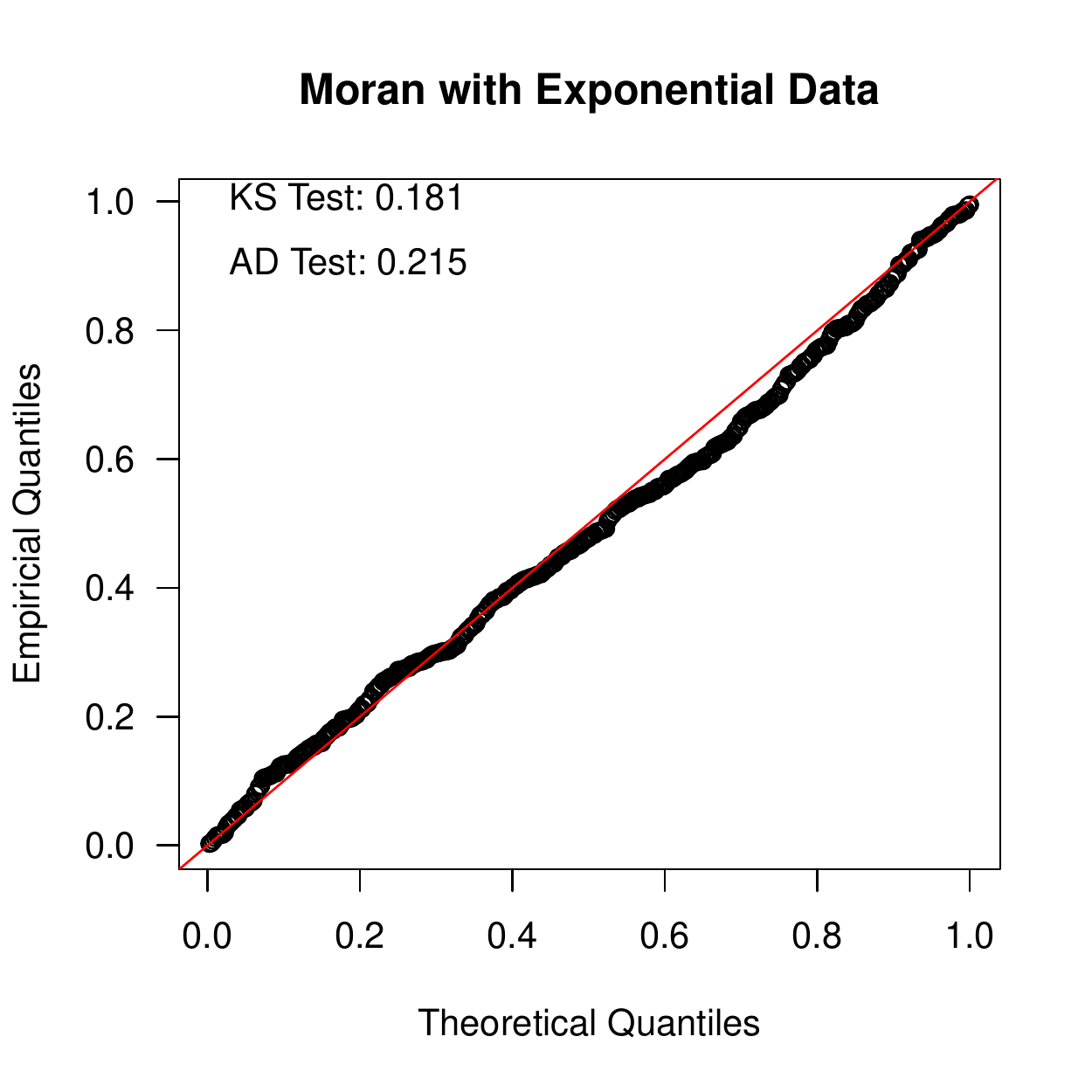}
		\includegraphics[width=0.45\textwidth]{\PICDIR/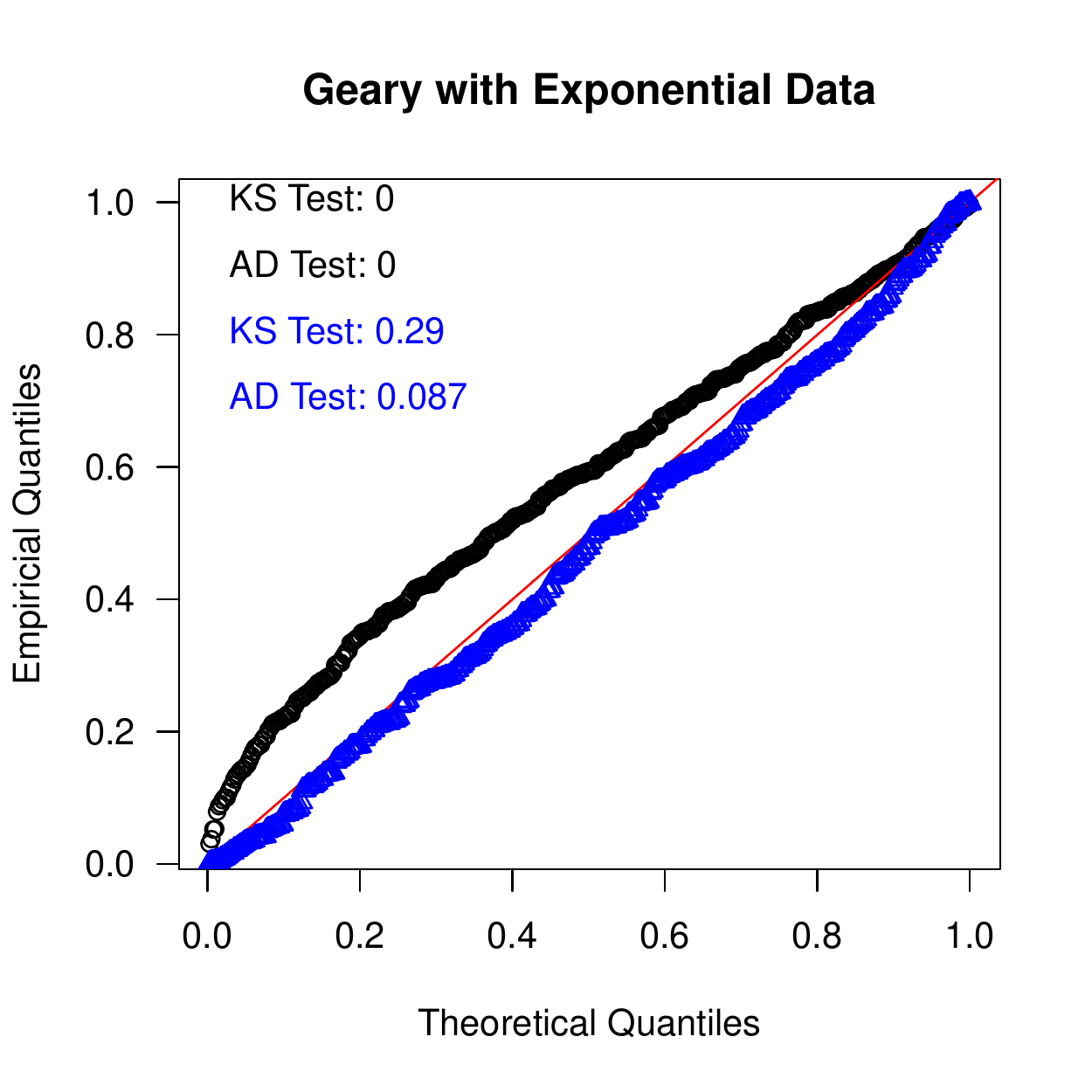}
	\end{center}
	\caption{
		\label{fig:glbSim}
		These plots depict of 400 p-values of ordered 
		p-values for global Moran's 
		and Geary's statistic and for independent Gaussian and 
		exponential data simulated on the entire map of Canada's 
		338 ridings.  This indicates that 
		Theorem~\ref{thm:globalGamma} produces p-values as would
		be expected in the null setting of independence expect 
		in the case of Geary's statistic with exponential data 
		where the blue triangles indicate empirically adjusted
		p-values.
	}
\end{figure}

In Figure~\ref{fig:glbPow}, we compare the statistical power of 
Theorem~\ref{thm:globalGamma} to the classic computation-based 
permutation test.  For both Moran's and Geary's statistic and 
both Gaussian and exponential data, we randomly generate 400 
datasets on the map of Canada with 6 different correlation matrices
being
$I + c_iA$ where $I$ is the identity matrix, $A$ is the adjacency
matrix for the map of Canada, and $c_i$ ranges from 0 to 0.15 for
Gaussian data and from 0 to 0.5 for exponential data.
The permutation test was performed by simulating 500 random 
permutations resulting
in a total of $500\times400\times6 = 1,200,000$ permutations in 
total. 
For Moran and Geary with Gaussian data, 
we see nearly identical statistical power from both 
methodologies.  
For Moran with Exponential data, there is a slight drop
in the statistical power.
In the case of Geary's statistic with
exponential data, we lose more
statistical power similar to the null setting above.  
By applying the empirical adjustment from Algorithm~\ref{algo:empBetaAlg}, 
we can recover some of the lost power.
As both Theorem~\ref{thm:localGamma} and~\ref{thm:globalGamma}
produce upper bounds the permutation test p-value, 
we note that the sharpness of these bounds is negatively
affected when the data is heavily skewed.  This is much 
more noticeable in Geary's statistic than in Moran's
statistic.

Lastly, we note that Theorem~\ref{thm:globalGamma} is 
specifically formulated to be a two-sided test.  Hence,
in Figure~\ref{fig:glbPow}, we are comparing its performance
with a two-sided permutation test.  As we only considered
positive correlations in this simulation, we could have 
achieved higher statistical power with a one-sided test.

\begin{figure}
	\begin{center}
		\includegraphics[width=0.45\textwidth]{\PICDIR/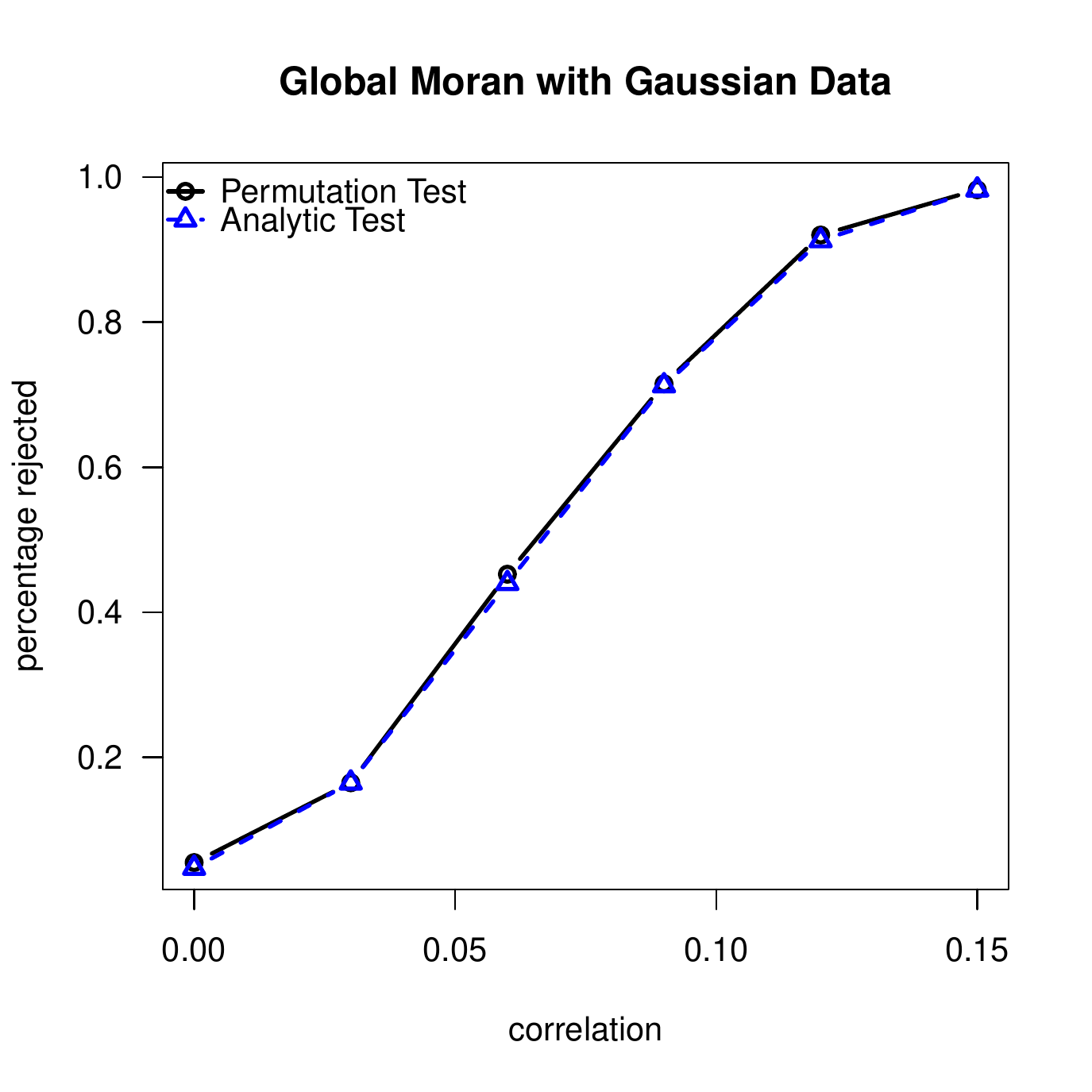} 
		\includegraphics[width=0.45\textwidth]{\PICDIR/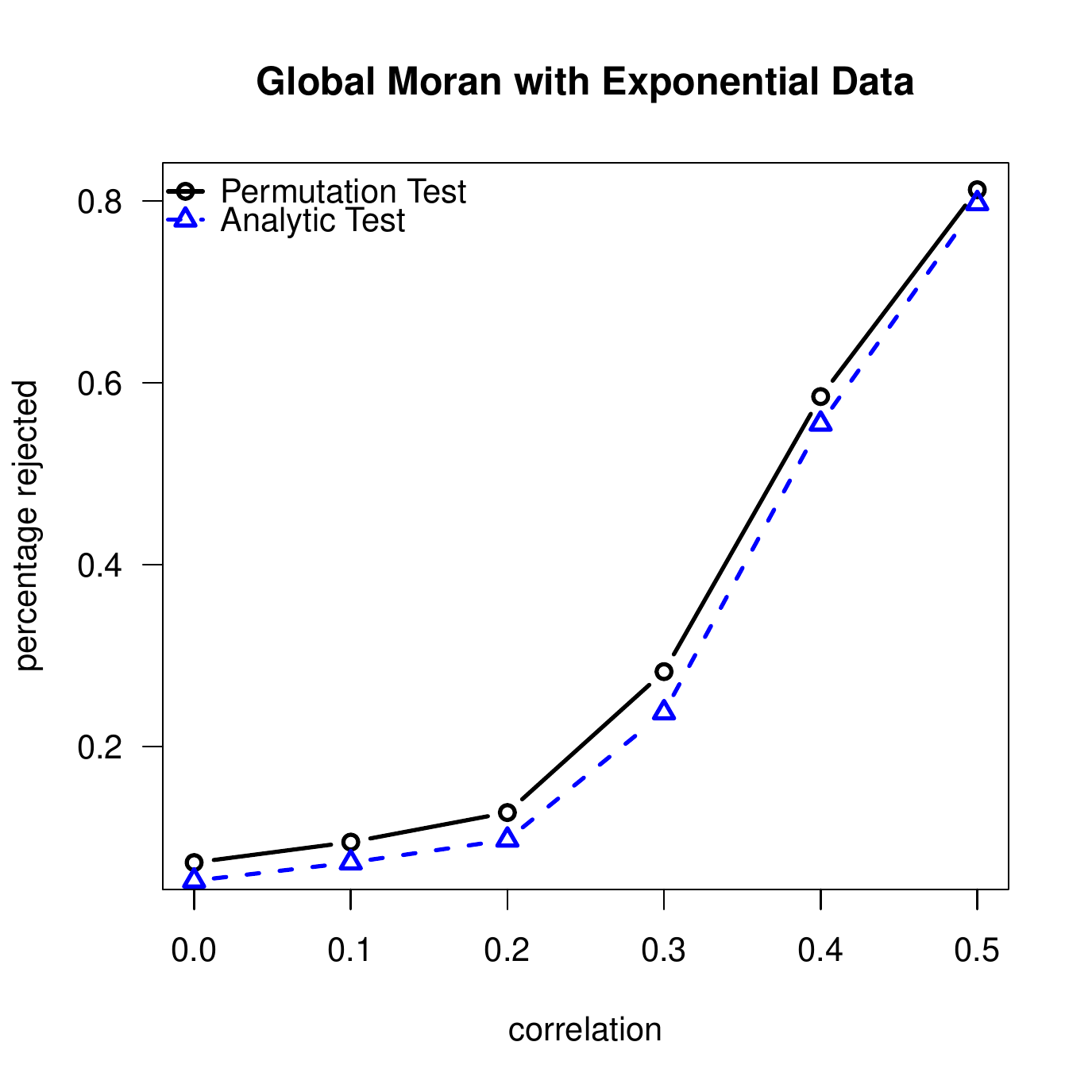}
		\includegraphics[width=0.45\textwidth]{\PICDIR/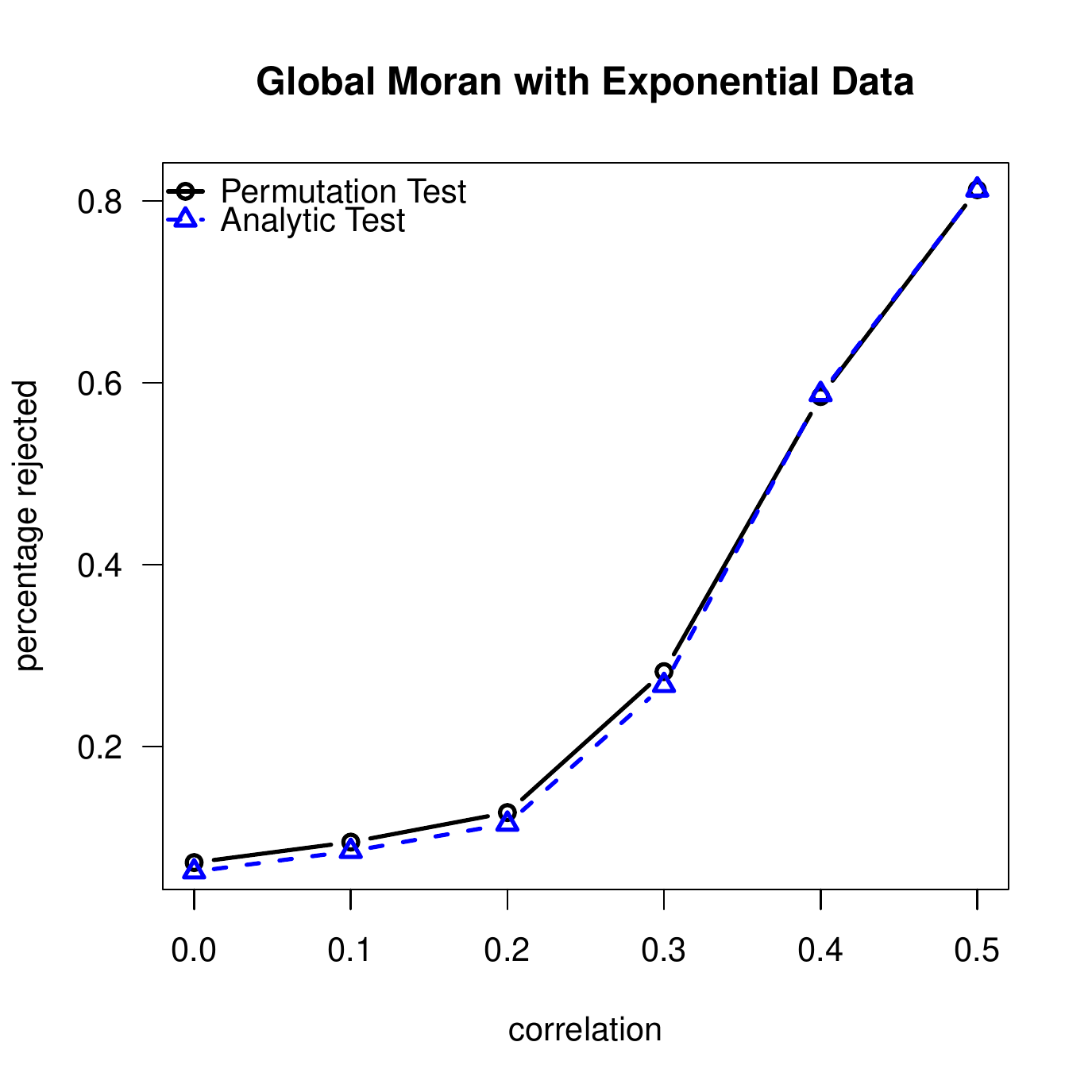}
		\includegraphics[width=0.45\textwidth]{\PICDIR/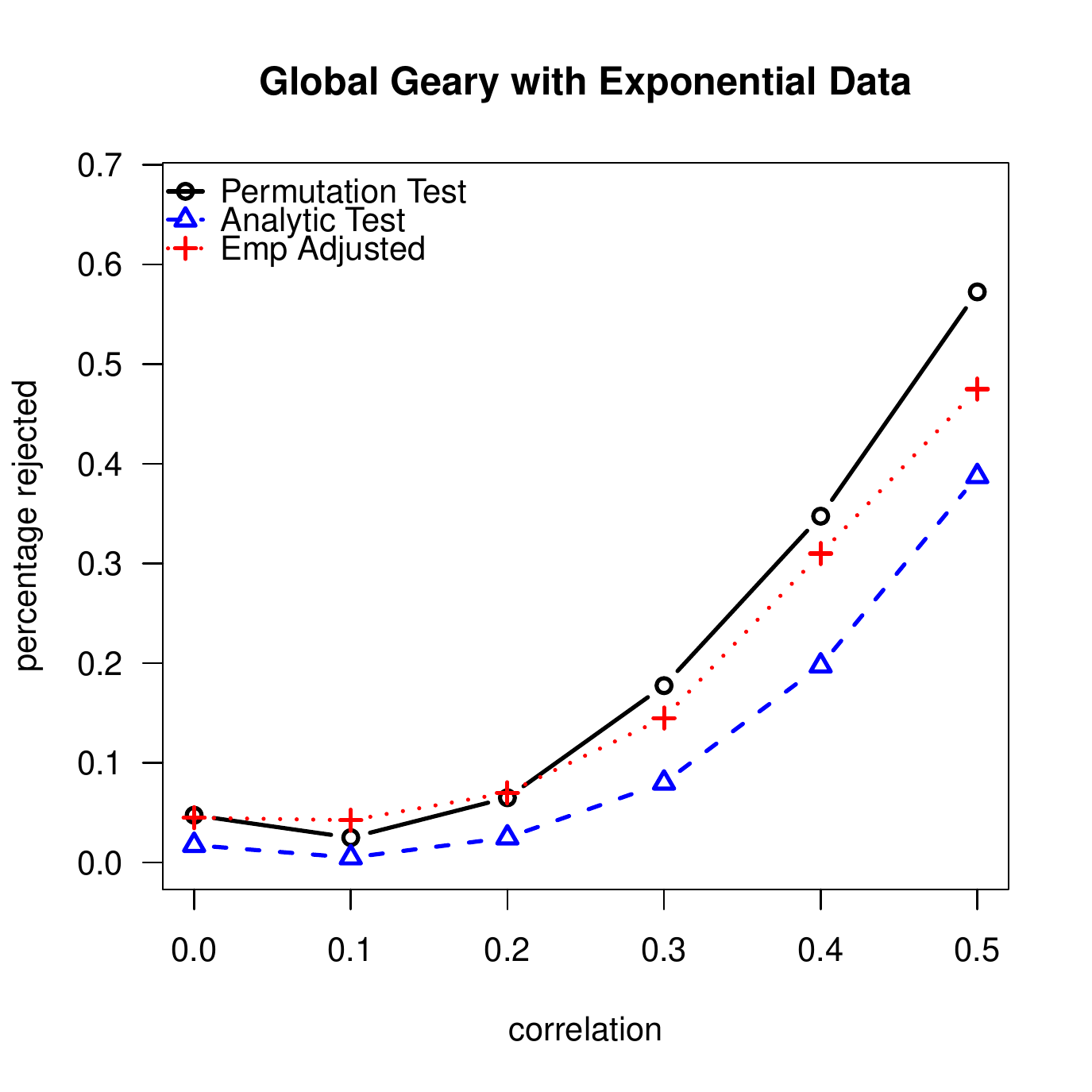}
	\end{center}
	\caption{
		\label{fig:glbPow}
		These plots compare the statistical power of our 
		method (blue $\triangle$) to the classic permutation test 
		(black $\circ$) with 500 permutations replicated 400 times
		at each of 6 different correlations.  The first three cases
		show that equivalent power is achieved in both methodologies.
		For Geary's statistic with skewed exponential data, the empirical
		adjustment is applied (red +'s) as in Figure~\ref{fig:glbSim}
		to recover lost power.
	}
\end{figure}

\subsection{Alberta Electorate Data}

The 2019 federal election 
resulted in all but one of Alberta's 34 ridings going to
the conservative party.  We apply our methodology to testing
for significant local spatial autocorrelation via both 
Moran's and Geary's statistics and for $k=1,2,3$ nearest 
neighbours weight matrices.  The measured response at 
each riding is the percentage of the popular vote 
captured by the conservative candidate.  Thus, our 
response variables $y_i\in[0,1]$ and left skewed 
as can be seen in Figure~\ref{fig:abHist}.
Thus, the assumption of normality does not hold.  
As the number of ridings is fixed at 34, 
we furthermore cannot rely on asymptotic statistics
as $n\nrightarrow\infty$.  In this section, we compare
p-values from our analytic variant on the permutation test to 
those from
the classic computation-based permutation test 
with 50,000 permutations at each node
We also consider p-values from z-scores 
based on the mean and variances computed in \cite{SOKAL1998}
in the appendix.
We note that while p-values based on the normal distribution
are common for Moran's statistic and readily available via the
\texttt{localmoran()} function in the \texttt{spdep}
\textsf{R} package 
\citep{BIVAND2013,BIVAND2018}, such a parametric approximation
is not advised for Geary's statistic \citep{ANSELIN2019,SEYAS2020}.

\begin{figure}
	\begin{center}
	  \includegraphics[width=0.65\textwidth]{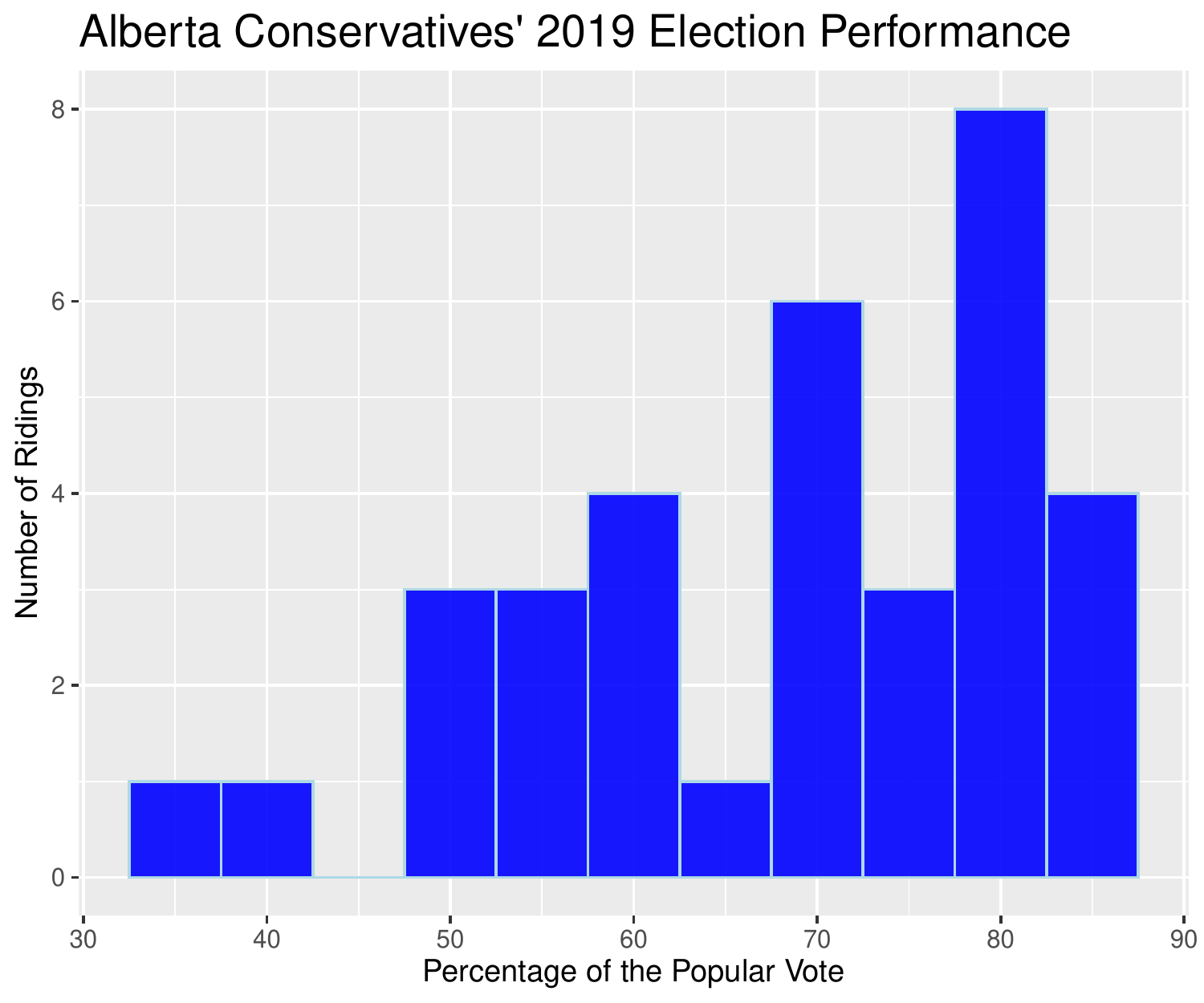}
	\end{center}
	\caption{
	  \label{fig:abHist}
	  A histogram of the popular vote gained by the
	  conservative candidates in the 34 ridings of Alberta.
	  The distribution of these measurements is clearly not
	  normal and is, in fact, skewed to the left.
	}
\end{figure}

We compare the p-value from the simulation-based permutation
test to the p-value produced from our analytic formulation 
of the permutation test.  For each vertex in the graph, 
50,000 permutations were randomly generated to compute
p-values for Moran's and Geary's statistics.  For choice
of weight matrix, we consider $k$-nearest neighbour matrices---i.e.
$W$ such that $w_{i,j}=1$ if the shortest path between
vertices $i$ and $j$ has length less than or equal to $k$
with $w_{i,i}=0$ for all $i$---for $k=1,2,3$.  When $k=3$,
two ridings are excluded from the analysis as they are within 
3 edges of all other ridings; these two ridings are 
\textit{Yellowhead} and \textit{Battle River--Crowfoot} located
in the centre west and centre east of the province, respectively.
The results are displayed in Figure~\ref{fig:pvComp}.  In 
general, our method returns similar if slightly more conservative
p-values than a computation-based permutation test.

\begin{figure}
  \begin{center}
  	\includegraphics[width=0.45\textwidth]{\PICDIR/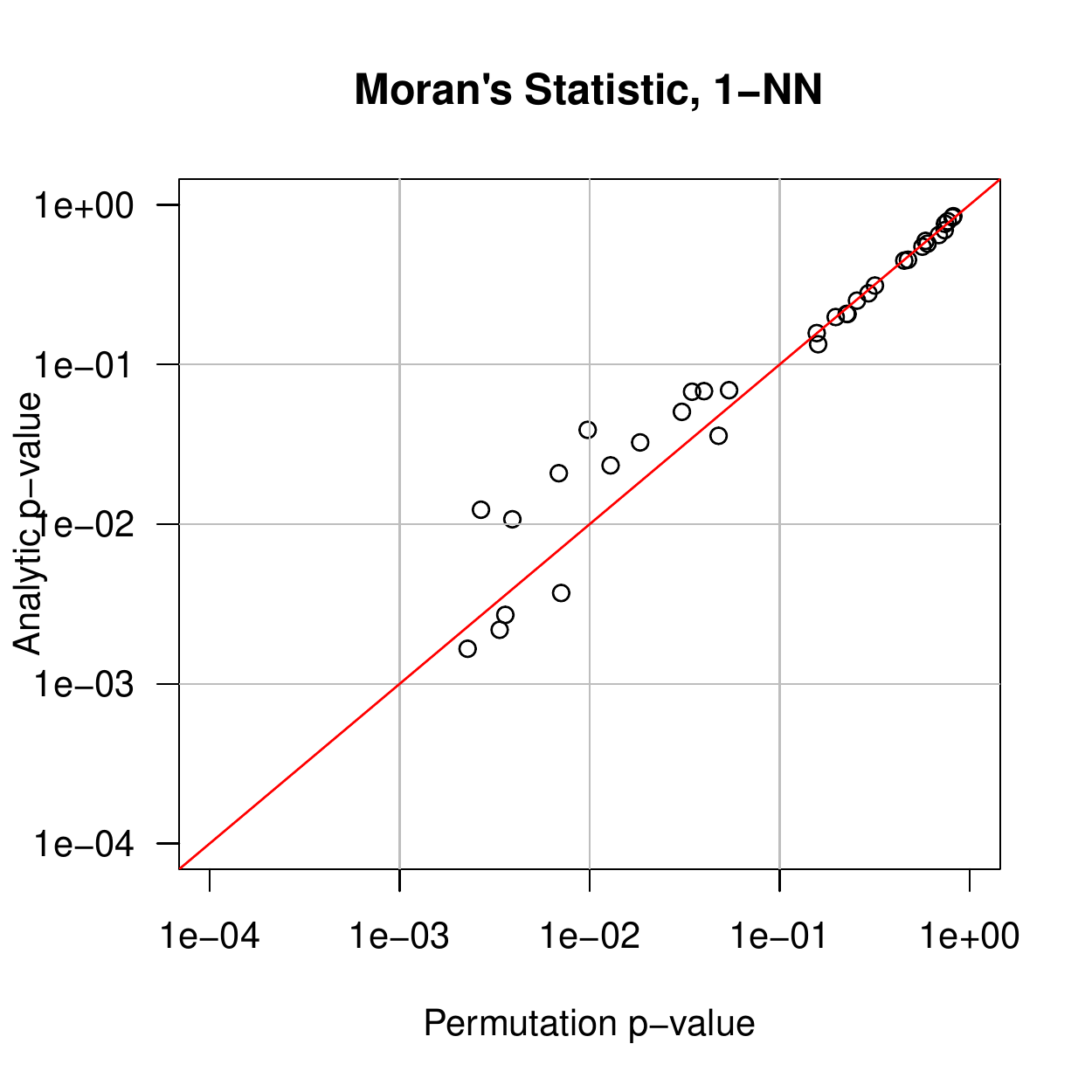}
  	\includegraphics[width=0.45\textwidth]{\PICDIR/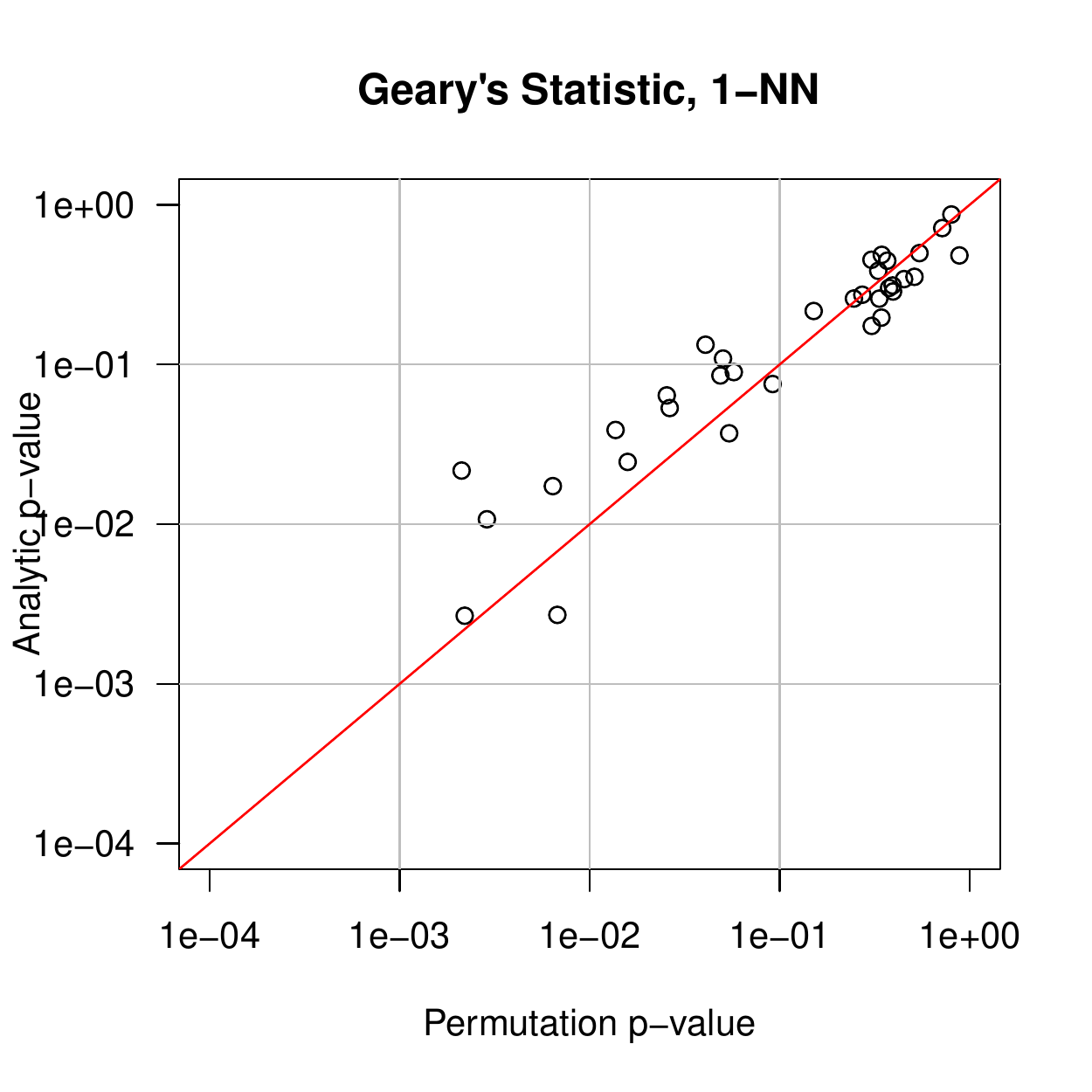}
  	\includegraphics[width=0.45\textwidth]{\PICDIR/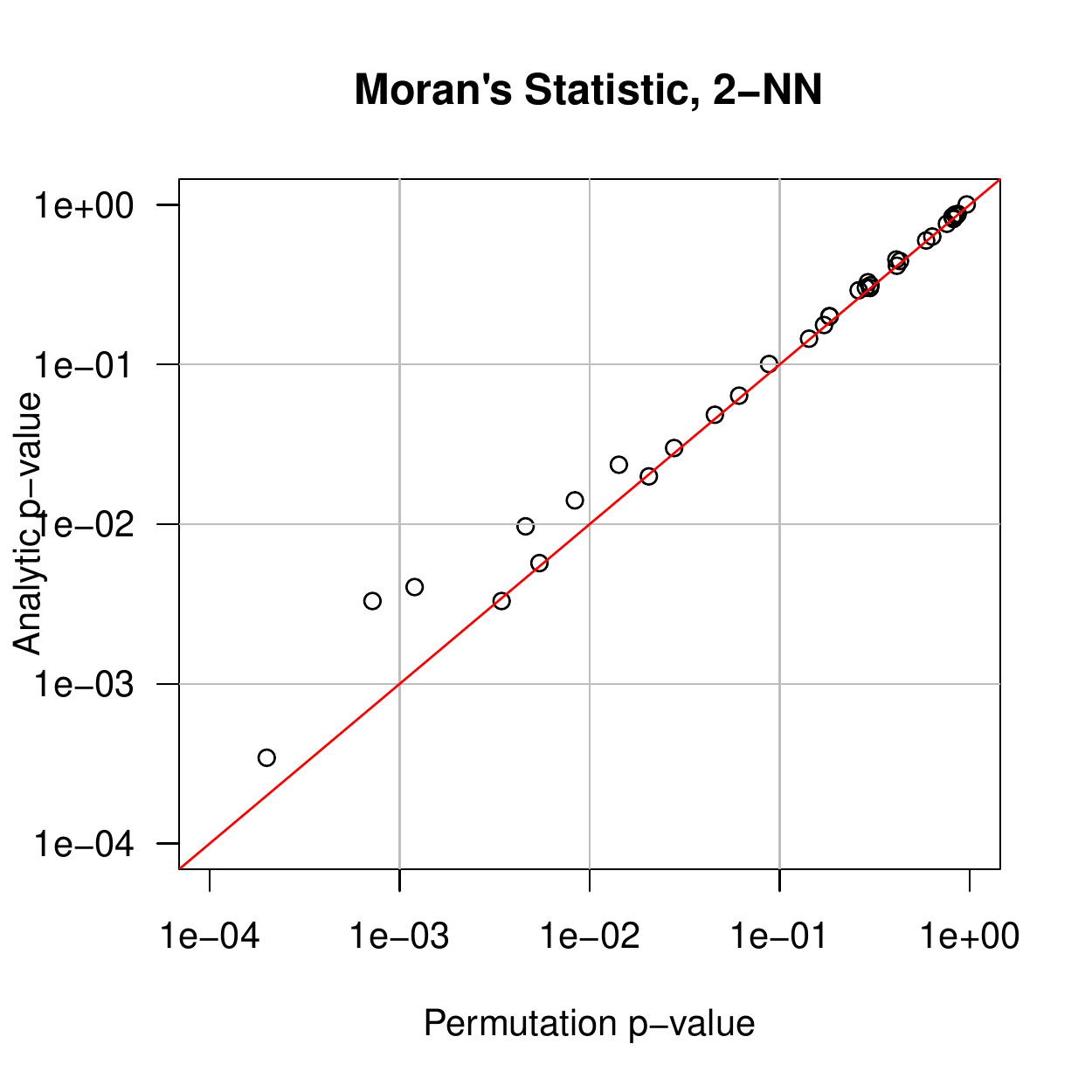}
    \includegraphics[width=0.45\textwidth]{\PICDIR/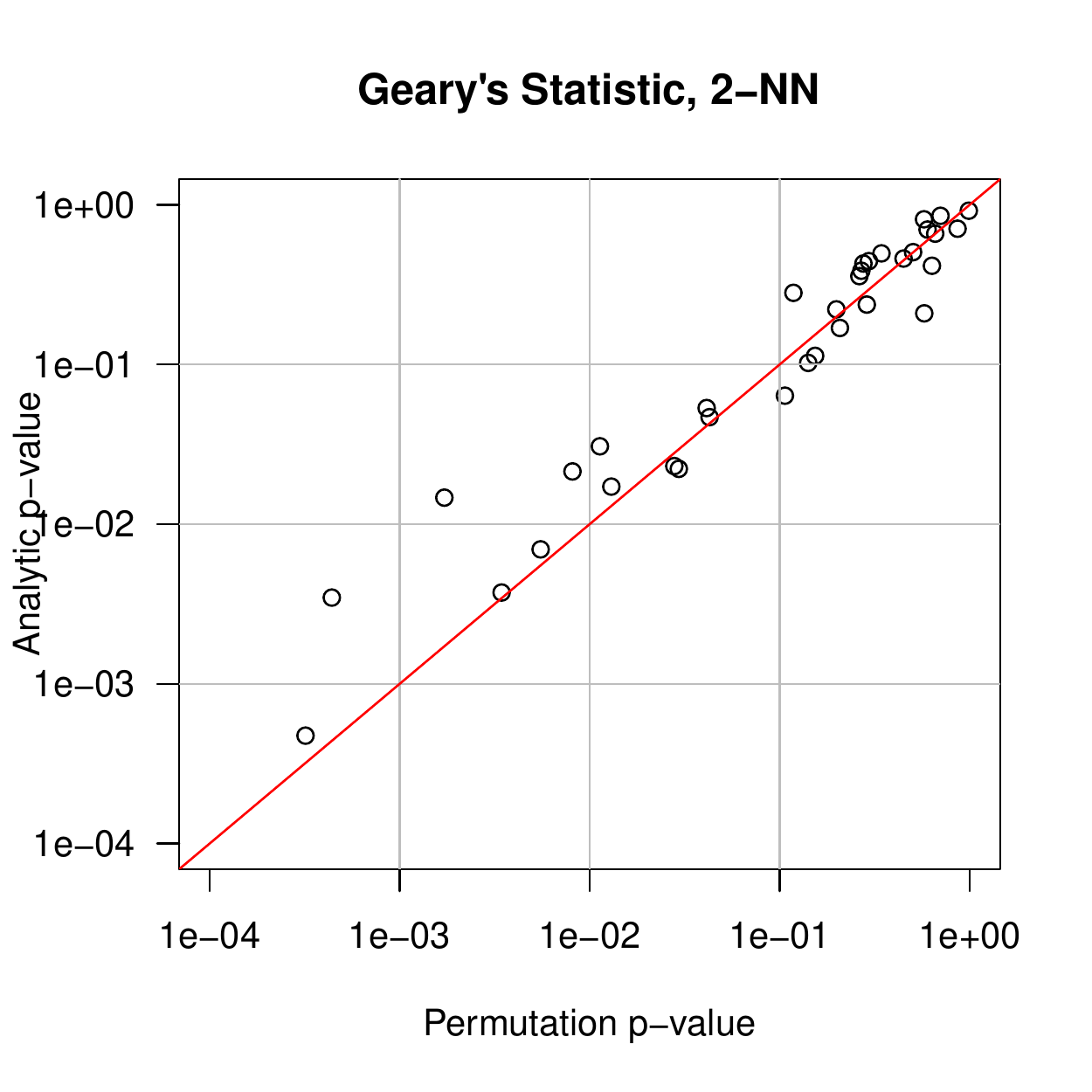}
  	\includegraphics[width=0.45\textwidth]{\PICDIR/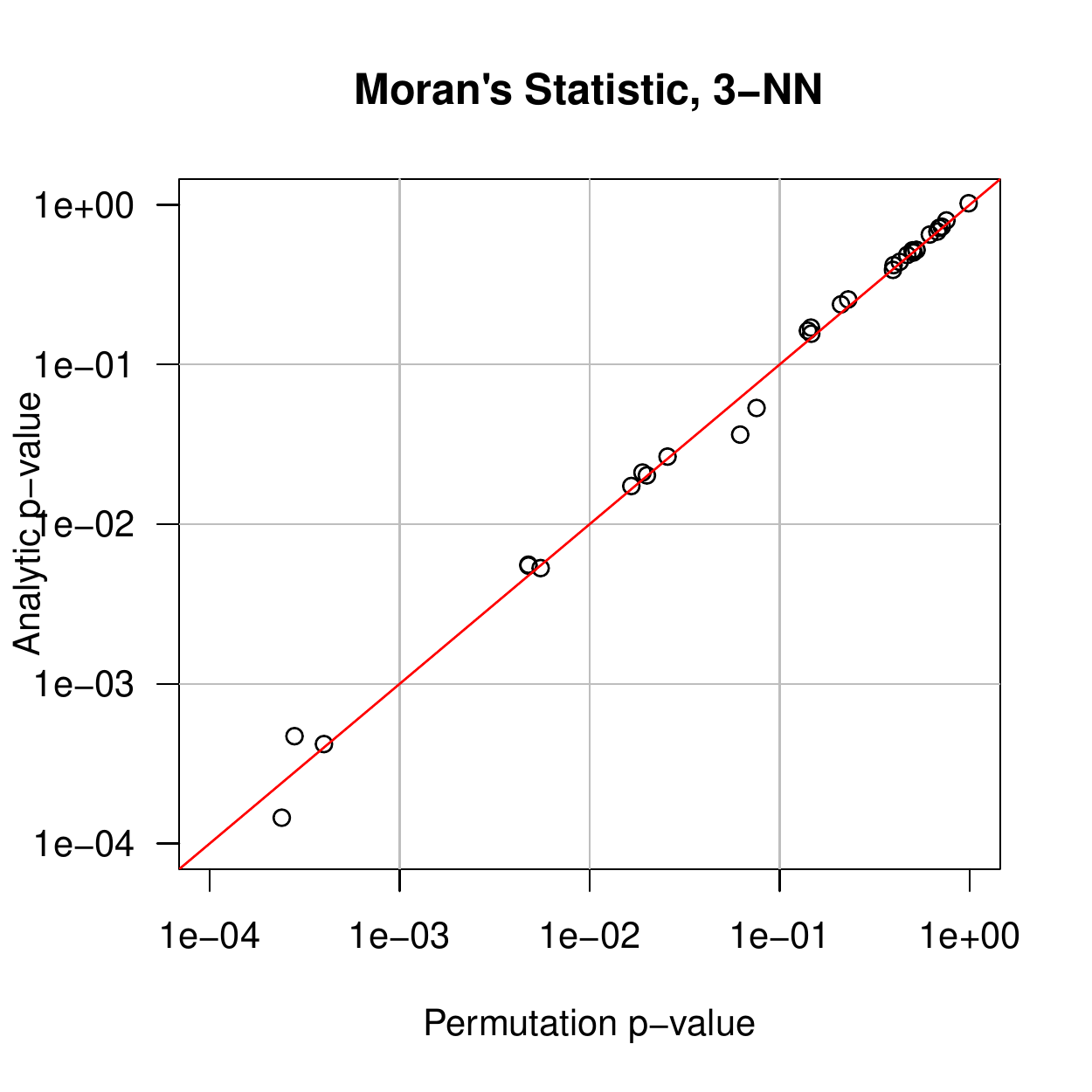}
    \includegraphics[width=0.45\textwidth]{\PICDIR/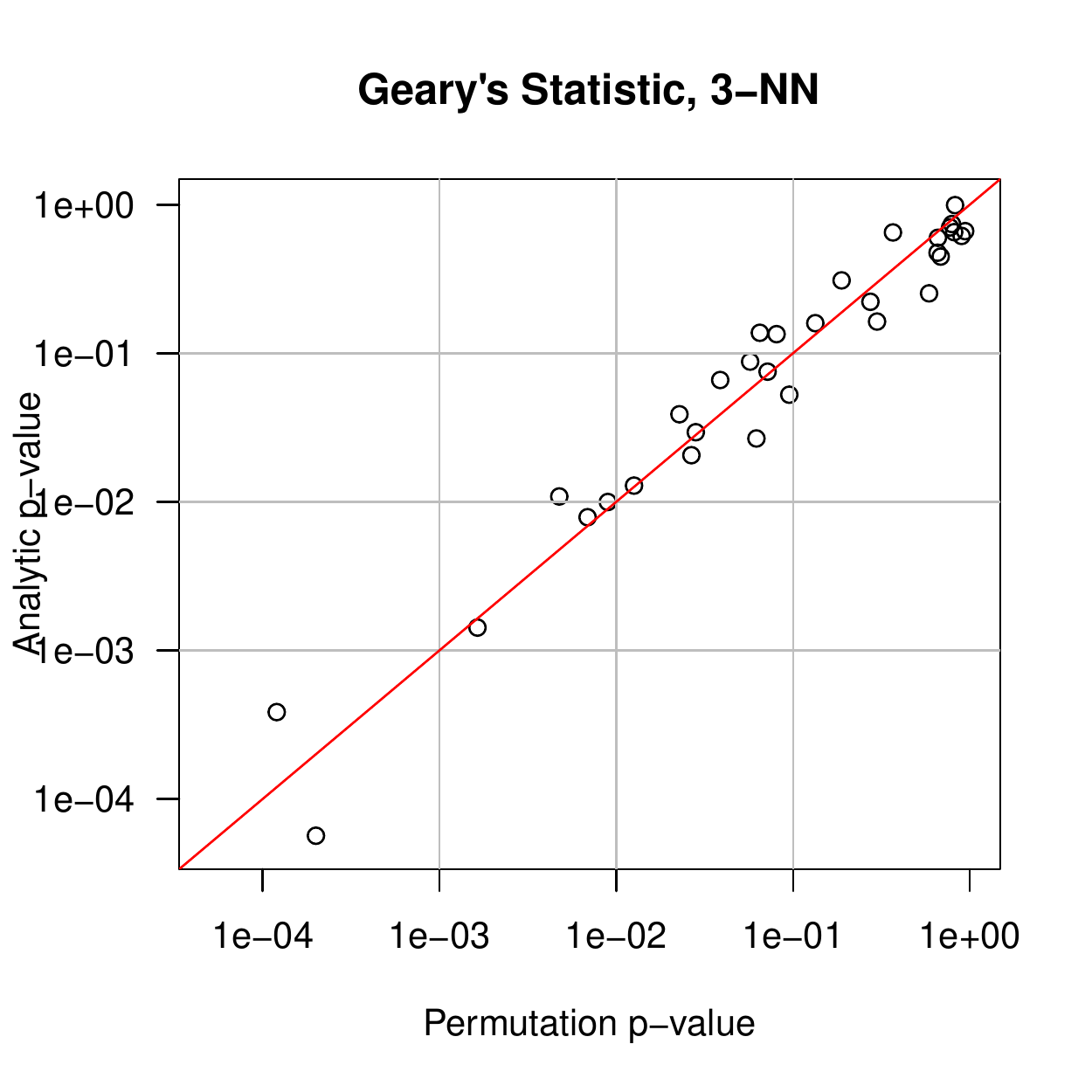}
  \end{center}
  \caption{
  	\label{fig:pvComp}
  	A comparison of the p-values produced by a simulation-based 
  	permutation test with 50,000 permutations per vertex
  	and the p-values produced by our analytic variant of
  	the permutation test.  The left column considers
  	Moran's statistic; the right column considers Geary's
  	statistic.  The three rows from top to bottom consider
  	the 1, 2, and 3-nearest neighbours weight matrix, respectively.
  }
\end{figure}

\begin{figure}
  \begin{center}
  	\underline{\bf One-Nearest-Neighbours}
  	\includegraphics[width=0.9\textwidth]{\PICDIR/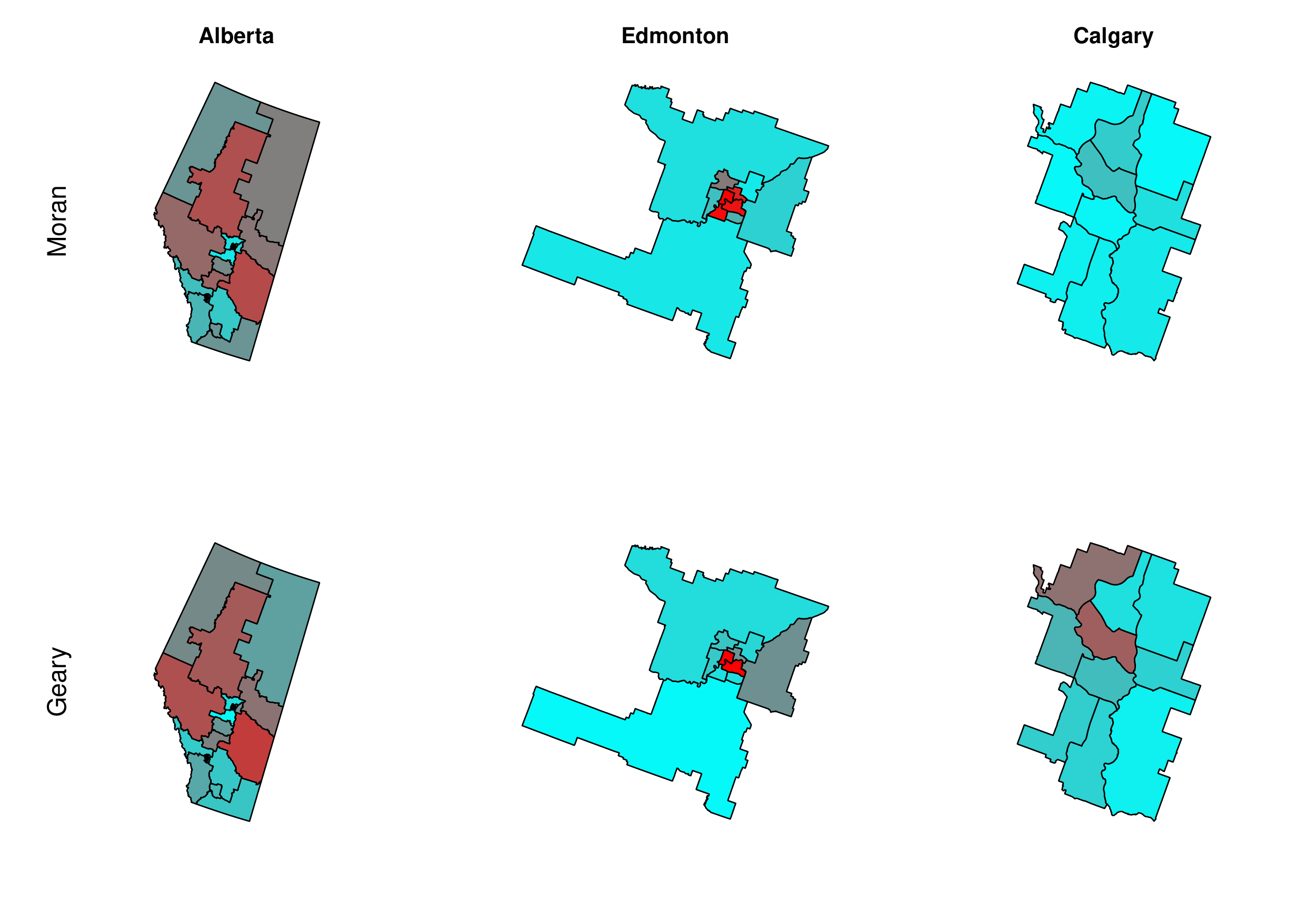}
  	\underline{\bf Two-Nearest-Neighbours}
  	\includegraphics[width=0.9\textwidth]{\PICDIR/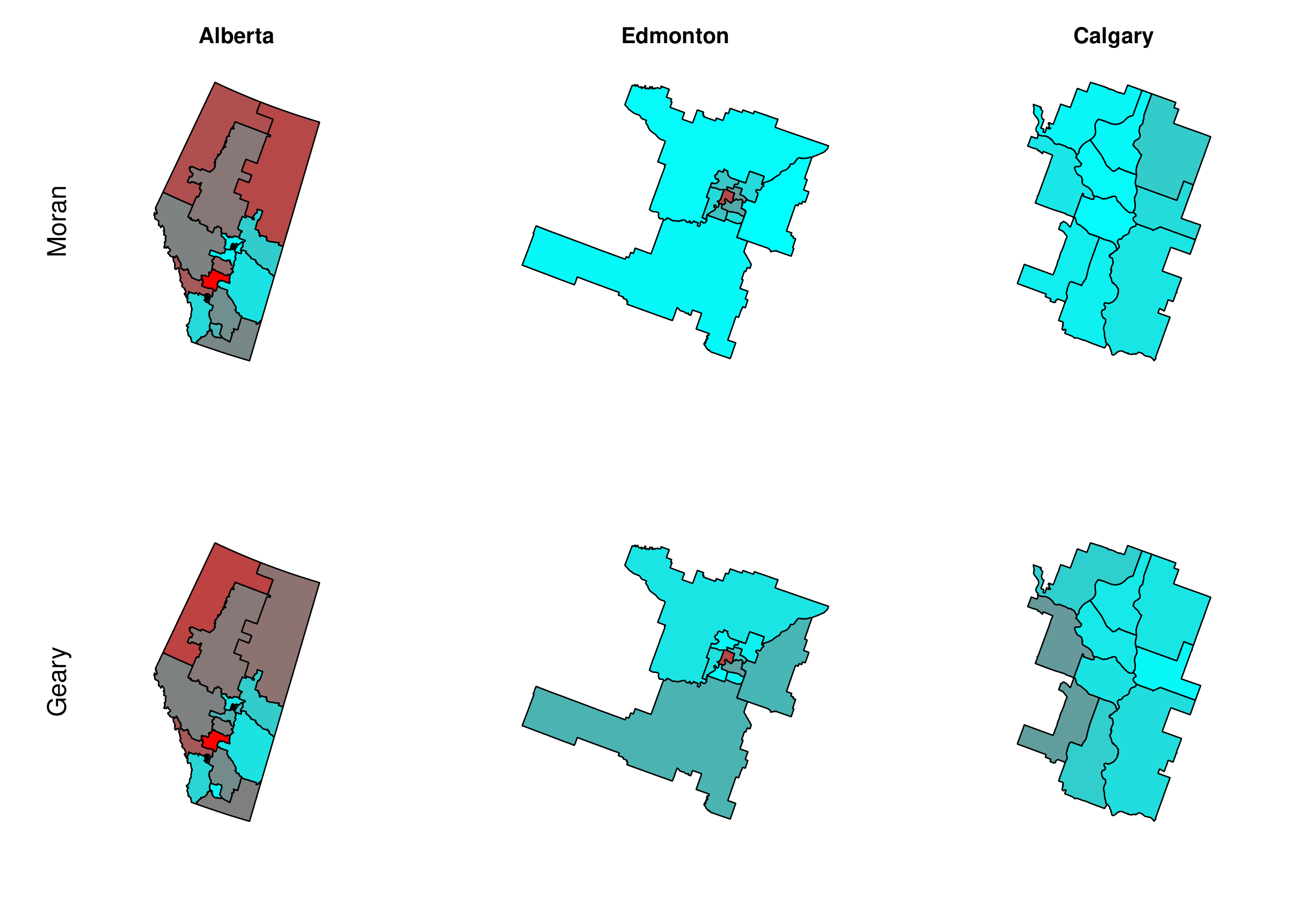}
  \end{center}
  \caption{
  	\label{fig:abMaps}
  	Maps of the province of Alberta along with the cities 
  	of Edmonton and Calgary coloured by p-values produced 
  	by local Moran's (upper row) and local Geary's (lower row) 
  	statistics using Theorem~\ref{thm:localGamma} using the 
  	one and two nearest neighbours matrices for $W$. 
  	Red indicates
  	a small p-value near zero while cyan indicates a large
  	p-value near 1.
  }
\end{figure}

\section{Discussion and Future Extensions}

While our focus in this work was on real valued measurements,
this methodology could be extended to the settings of multivariate
or functional responses. The theorems we rely on
from \cite{KASHLAK_KHINTCHINE2020} extend from the real line to 
general Banach spaces.
Spatial modelling with functional 
responses was considered, for example, in \citep{TAVAKOLI2019}
and is a challenging problem.

Our main theorem is proven for binary weight matrices.  
The 0's and 1's naturally lead to test statistics
being reformulated into two sample tests so that the ideas 
of \cite{KASHLAK_KHINTCHINE2020} can be applied.  However, 
non-binary weight matrices are often of interest in spatial
data analysis.  Extensions to such will require the adaptation
of novel versions of the Khintchine inequality such as 
\cite{HAVRILLA2019}.

\section*{Acknowledgements}

The authors would like to thank the 
Natural Sciences and Engineering Research Council of Canada 
(NSERC) for the funding provided via their Discovery Grant
program.


\bibliographystyle{plainnat}
\bibliography{./kasharticle,./kashbook,./kashpack,./kashself}

\appendix

\section{Proofs}

\subsection{LISA Proofs}

\begin{proof}[Proof of Theorem~\ref{thm:localGamma}]
	We first recall that $n$ is the total number of vertices in 
	the graph $\mathcal{G}$ and that $m_i$ is the number of edges at vertex
	$\nu_i$.  For binary weights $w_{i,j}\in\{0,1\}$, we apply 
	an affine transformation to define 
	imbalanced Rademacher weights 
	$$
	\delta_{i,j} = \frac{n-1}{m_i(n-m_i-1)} w_{i,j}-\frac{1}{n-m_i-1}\in\left\{
	-\frac{1}{n-m_i-1},\frac{1}{m_i}
	\right\}
	$$
	for $i\ne j$ while maintaining that $\delta_{i,i}=w_{i,i}=0$.
	Note that since $\sum_jw_{i,j}=m_i$ that $\sum_j \delta_{i,j}=0$.
	
	Considering $\gamma_i = \sum_{j=1}^nw_{i,j}\lmb_{i,j}$, 
	we have that 
	\begin{align*}
	\gamma_i = \sum_{j=1}^nw_{i,j}\lmb_{i,j}
	&= \frac{m_i(n-m_i-1)}{n-1}\sum_{j=1}^n\left\{
	\delta_{i,j}+(n-m_i-1)^{-1}
	\right\}\lmb_{i,j}\\
	&= \frac{m_i(n-m_i-1)}{n-1}\sum_{j=1}^n\delta_{i,j}\lmb_{i,j} + m_i\bar{\lmb}_{-i}
	\end{align*}
	where $\bar{\lmb}_{-i} = (n-1)^{-1}\sum_{j\ne i}\lmb_{i,j}$.
	Therefore, our statistic $\gamma_i$ is equivalent up to affine transformation
	to a two sample test for equality of the mean
	of the $\lmb_{i,j}$ such that $w_{i,j}=1$ and the mean of those $\lmb_{i,j}$
	with $w_{i,j}=0$ excluding the value $\lmb_{i,i}$ from this test.
	Thus, for $\pi$ being a uniformly random element of $\mathbb{S}_n$, 
	the symmetric group on $n$ elements, with the restriction that 
	$\pi(i)=i$, we define the permuted test statistic to be 
	$
	\gamma_i(\pi) =
	\frac{m_i(n-m_i-1)}{n-1}\sum_{j=1}^n\delta_{i,j}\lmb_{i,\pi(j)}+m_i\bar{\lmb}_{-i}
	$.  We note that a permutation test on $\gamma_i$ is equivalent to 
	a permutation test on $T_i=\sum_{j=1}^n\delta_{i,j}\lmb_{i,j}$. 
	Let $\Omega_i = [
	\delta\in\{-\frac{1}{n-m_i-1},0,\frac{1}{m_i}\}^{n}\,|\,
	\delta_{i}=0,\delta_{j}\ne0\text{ for }j\ne i,
	\sum_{j=1}^n\delta_j=0
	]$ be the set of possible $n$-dimensional weight vectors $\delta$
	that fix $\delta_i=0$.
	We note the following correspondence similar to \cite{SPEKTOR2016} that 
	\begin{multline*}
	\left\{\pi\in\mathbb{S}_n\middle|\pi(i)=i\right\} 
	\leftrightarrow 
	\left[
	\delta\in\Omega_i\,\middle|\,\right.\\\left.
	\delta_i=0\text{ and }\left\{
	\begin{array}{ll}
	\text{for }i\le m_i, &
	\delta_j=\frac{1}{m_i}\text{ if }\pi(j)\le m_i+1
	\text{ and }\delta_j=-\frac{1}{n-m_i-1}\text{ if  }\pi(j)>m_i+1 \\
	\text{for }i> m_i, &
	\delta_j=\frac{1}{m_i}\text{ if }\pi(j)\le m_i
	\text{ and }\delta_j=-\frac{1}{n-m_i-1}\text{ if }\pi(j)>m_i
	\end{array}
	\right\}
	\right].
	\end{multline*}
	Thus, we can consider the permuted test statistic with respect to a dependent
	vector of random weights $\delta\in\Omega$.
	That is, conditional of the 
	$y_i$, $T_i(\pi)$ can be treated as a weakly dependent weighted Rademacher sum. 
	Applying Theorem~2.1 of \cite{KASHLAK_KHINTCHINE2020} for imbalanced two sample tests 
	for equality of means under Condition~\ref{cond:noConVert} that
	$m_i \le n-m_i-1$, 
	we have that
	$$
	\prob{ \abs{T_i(\pi)} \ge t } \le
	\exp\left( 
	- \frac{m_i^3t^2}{2s_i^2(n-1)^2}
	\right)
	$$
	where $s_i^2=(n-1)^{-1}\sum_{j\ne i}(\lmb_{i,j}-\bar{\lmb}_{-i})^2$ is the 
	sample variance of the $\lmb_{i,j}$ for $j\ne i$.  
	Translating back to the local gamma index, we have 
	\begin{align*}
	\prob{ \abs{\gamma_i(\pi)-m_i\bar{\lmb}_{-i}} \ge \gamma_i } &\le
	\exp\left( 
	-\frac{m_i^3}{2s_i^2(n-1)^2}\left[
	\frac{n-1}{m_i(n-m_i-1)}\gamma_i 
	\right]^2
	\right)\\
	&\le
	\exp\left( 
	-\frac{m_i\gamma_i^2}{2s_i^2(n-m_i-1)^2}
	\right)
	\end{align*}

	For the final part of Theorem~\ref{thm:localGamma}, we
	apply the beta transform from \cite{KASHLAK_KHINTCHINE2020} Proposition~2.5
	resulting in 
	$$
	\prob{ \abs{\gamma_i(\pi)-m_i\bar{\lmb}_{-i}} \ge \gamma_i } \le 
	C_0 I\left[
	\exp\left( 
	-\frac{m_i\gamma_i^2}{2s_i^2(n-m_i-1)^2}
	\right);
	\frac{(n-1)(n-m_i-1)}{m_i^2},\frac{1}{2}
	\right]
	$$
	where $I[\cdot]$ is the regularized incomplete beta function and
	$$
	C_0 = \frac{
		{\sqrt{(n-1)(n-m_i-1)}}\Gamma\left(\frac{(n-1)(n-m_i-1)}{m_i^2}\right)
	}{
		m_i\Gamma\left(\frac{1}{2}+\frac{(n-1)(n-m_i-1)}{m_i^2}\right)
	}
	$$
	with $\Gamma(\cdot)$ the gamma function.  
\end{proof}

\subsection{GISA Proofs}

\begin{lemma}
  \label{lem:halfBinom}
  For $q>0$ and $\abs{c}<1$
  \begin{equation}
  \label{eqn:binomIneq}
    \sum_{k=0,\,k\,\mathrm{mod}\,2=0}^{2q} 
    \frac{\Gamma(q+1)c^{k/2}}{
    	\Gamma(\frac{k}{2}+1)
    	\Gamma(q-\frac{k}{2}+1)
    }
    \ge
    \sum_{k=1,\,k\,\mathrm{mod}\,2=1}^{2q-1} 
    \frac{\Gamma(q+1)c^{k/2}}{
    	\Gamma(\frac{k}{2}+1)
    	\Gamma(q-\frac{k}{2}+1)
    }
  \end{equation}
  and furthermore
  $$
    \sum_{k=1,\,k\,\mathrm{mod}\,2=1}^{2q-1} 
    \frac{\Gamma(q+1)c^{k/2}}{
    	\Gamma(\frac{k}{2}+1)
    	\Gamma(q-\frac{k}{2}+1)
    }
    = (1+c)^q + O(q^{-1/2}).
  $$
\end{lemma}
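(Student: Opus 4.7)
The LHS of~\eqref{eqn:binomIneq} is immediate. Substituting $k=2j$ for $j=0,1,\ldots,q$ and using $\Gamma(q+1)/[\Gamma(j+1)\Gamma(q-j+1)] = \binom{q}{j}$, the binomial theorem gives
$$
\sum_{k=0,\,k\,\mathrm{mod}\,2=0}^{2q}\frac{\Gamma(q+1)c^{k/2}}{\Gamma(\frac{k}{2}+1)\Gamma(q-\frac{k}{2}+1)} \;=\; (1+c)^q.
$$
This pins the LHS at the very value the second claim asserts the odd-indexed sum approximates, so the lemma is really saying that the odd sum lies below $(1+c)^q$ by at most $O(q^{-1/2})$.

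For the inequality itself, the plan is to exploit log-concavity of the continuous binomial $f(x) = \Gamma(q+1)c^x/[\Gamma(x+1)\Gamma(q-x+1)]$ on $[0,q]$. Differentiating twice, $(\log f)''(x) = -\psi'(x+1)-\psi'(q-x+1) < 0$ since the trigamma function $\psi'$ is strictly positive on $(0,\infty)$, so $f$ is log-concave. This gives $f(j+\frac{1}{2})^2 \le f(j)f(j+1)$ and hence, by AM-GM, $f(j+\frac{1}{2}) \le \frac{1}{2}[f(j)+f(j+1)]$. Summing $j=0,\ldots,q-1$ telescopes to
$$
\sum_{j=0}^{q-1} f(j+\tfrac{1}{2}) \;\le\; \sum_{j=0}^{q} f(j) \;-\; \frac{1}{2}\bigl[f(0)+f(q)\bigr],
$$
and identifying the two sides with the odd and even sums in the lemma recovers~\eqref{eqn:binomIneq}, with a strictly positive margin of $\frac{1}{2}(1+c^q)$.

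For the asymptotic I would substitute $k=2j+1$ and factor out $\sqrt{c}$, leaving a sum of the form $\sqrt{c}\sum_{j=0}^{q-1}\Gamma(q+1)c^j/[\Gamma(j+\frac{3}{2})\Gamma(q-j+\frac{1}{2})]$. Stirling's ratios $\Gamma(j+1)/\Gamma(j+\frac{3}{2}) \sim (j+1)^{-1/2}$ and $\Gamma(q-j+1)/\Gamma(q-j+\frac{1}{2}) \sim (q-j)^{1/2}$ (or equivalently the Legendre duplication formula) compare each term to $\binom{q}{j}c^j\sqrt{c(q-j)/(j+1)}$. At the binomial mode $j^\ast = qc/(1+c)$ this extra factor equals $1$ to leading order, and splitting the sum into a bulk window $|j-j^\ast| \le C\sqrt{q\log q}$ together with an exponentially small tail controlled by a Chernoff bound on the binomial should produce $(1+c)^q + O(q^{-1/2})$. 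I expect this asymptotic to be the main obstacle: tightening the uniform Stirling expansion across the bulk to match the advertised $q^{-1/2}$ scale, rather than settling for a coarser $1/q$ error, is the delicate piece. The inequality, by contrast, falls out cleanly from log-concavity.
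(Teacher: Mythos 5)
Your identification of the even-indexed sum with $(1+c)^q$ is correct and matches the paper. The fatal problem is in the inequality argument: you compute $(\log f)''(x) = -\psi'(x+1)-\psi'(q-x+1) < 0$ correctly, so $f$ is indeed log-concave, but log-concavity gives $f(j+\tfrac{1}{2}) \ge \sqrt{f(j)f(j+1)}$ --- the midpoint value \emph{dominates} the geometric mean --- which is the reverse of the bound $f(j+\tfrac{1}{2})^2 \le f(j)f(j+1)$ your chain requires, and it cannot be concatenated with AM--GM to reach $f(j+\tfrac{1}{2}) \le \tfrac{1}{2}[f(j)+f(j+1)]$. Worse, that termwise conclusion is simply false: for $q=1$, $c=0.9$, $j=0$ one has $f(\tfrac{1}{2}) = 4\sqrt{c}/\pi \approx 1.21$ while $\tfrac{1}{2}[f(0)+f(1)] = (1+c)/2 = 0.95$. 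The lemma's inequality holds only in aggregate (the odd sum has one fewer term, and the half-integer Gamma factors push mass toward the middle of the row where individual odd terms can exceed the average of their even neighbours), so no midpoint-versus-endpoints comparison of this type can succeed. Your asymptotic sketch via Stirling ratios and a bulk/tail split is plausible in outline but is not carried out, and as you yourself note the uniform error control at the $q^{-1/2}$ scale is the hard part.

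For comparison, the paper's proof recognizes the odd sum (after factoring out $\sqrt{c}$) as a truncation of the Gauss hypergeometric series ${}_2F_1(-q-\tfrac{1}{2},1;\tfrac{1}{2};-c)$, writes the truncation remainder as $c^{q+1}\,{}_2F_1(\tfrac{1}{2},1;q+\tfrac{1}{2};-c)$, and bounds both pieces using classical transformation formulas (Pfaff, Euler's integral, the linear transformation 15.3.8), Gauss's summation formula, and Gautschi's inequality for Gamma ratios. This produces the explicit upper bound $(1+c)^q - \tfrac{3}{\sqrt{2\pi}}\bigl[(q+\tfrac{1}{2})^{-1} + \tfrac{1}{2}(q+\tfrac{1}{2})^{-2}\bigr]^{1/2}$, which delivers the inequality and the $O(q^{-1/2})$ deficit in a single computation. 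Any repair of your approach must work at the level of the full sums rather than term by term.
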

\begin{remark}
  In the proof of Lemma~\ref{lem:halfBinom}, we use a variety of
  transformations for hypergeometric functions, which can be
  found in the NIST Digital Library of Mathematical Functions
  \citep{NIST:DLMF} as well as in 
  Chapter 15 of \cite{ABRAMOWITZ_STEGUN}.  
  These include the following
  where $\abs{z}\le1$ and $a,b,c\in\complex$ such that 
  $\mathcal{R}(c-a-b)>0$.
  \begin{itemize}
  	\item Gauss' summation formula,
  	$
  	  \hgeo{2}{1}(a,b;c;1) =
  	  \frac{\Gamma(c)\Gamma(c-a-b)}{\Gamma(c-a)\Gamma(c-b)}
  	$
  	for $c\ne0,-1,-2,\ldots$
  	\citep[Eqn 15.1.20]{ABRAMOWITZ_STEGUN}.
  	\item Euler's integral transform,
  	$
  	\hgeo{2}{1}(a,b;c;z) =
  	\frac{\Gamma(c)}{\Gamma(b)\Gamma(c-b)}
  	\int_{0}^{1} t^{b-1}(1-t)^{c-b-1}(1-tz)^{-a}dt
  	$
  	for $\mathcal{R}(c) > \mathcal{R}(b) > 0$
  	\citep[Eqn 15.3.1]{ABRAMOWITZ_STEGUN}.
  	\item Pfaff's linear transform,
  	$
  	  \hgeo{2}{1}(a,b;c;z) =
  	  (1-z)^{-a}\hgeo{2}{1}(a,c-b;c;\frac{z}{z-1})
  	$
  	\citep[Eqn 15.3.4]{ABRAMOWITZ_STEGUN}.
  	\item Another linear transformation
  	$
  	  \hgeo{2}{1}(a,b;c;z) = 
  	  \frac{\Gamma(c)\Gamma(b-a)}{\Gamma(b)\Gamma(c-a)}
  	  (1-z)^{-a}\hgeo{2}{1}(
  	  a, c-b; a-b+1; \frac{1}{z-1}
  	  ) + 
  	  \frac{\Gamma(c)\Gamma(a-b)}{\Gamma(a)\Gamma(c-b)}
  	  (1-z)^{-b}\hgeo{2}{1}(
  	    b, c-a; b-a+1; \frac{1}{z-1}
  	  )
  	$
  	for $\abs{\mathrm{arg}(1-z)}<\pi$
  	\citep[Eqn 15.3.8]{ABRAMOWITZ_STEGUN}.
  \end{itemize}
  We also make use of Gautschi’s inequality for the 
  ratio of two Gamma functions, 
  $
  x^{1-s} < \frac{\Gamma(x+1)}{\Gamma(x+s)} <
  (x+1)^{1-s}
  $
  for $s\in(0,1)$
  \citep[Eqn. 5.6.4]{NIST:DLMF}.
\end{remark}
\begin{proof}
  We first note that the lefthand side of 
  Equation~\ref{eqn:binomIneq} is just $(1+c)^q$, which
  can be written as the generalized hypergeometric function
  ${}_1F_0(-q;;-c)$.
  
  For the righthand side, we rewrite it as 
  $$
    \mathrm{RHS}(\mathrm{Eqn}\,\ref{eqn:binomIneq}) = 
    \sum_{r=1}^{q} 
    \frac{\Gamma(q+1)c^{r-1/2}}{
    	\Gamma(r+\frac{1}{2})
    	\Gamma(q-r+\frac{3}{2})
    } = \frac{\Gamma(q+1)}{\sqrt{c}}\sum_{r=1}^{q} 
    \frac{c^{r}}{
      \Gamma(r+\frac{1}{2})
      \Gamma(q-r+\frac{3}{2})
    }
  $$
  and note that the ratio of consecutive terms is
  $$
    \left[{
        \frac{c^{r+1}}{
     	\Gamma(r+\frac{3}{2})
     	\Gamma(q-r+\frac{1}{2})
        }
   	}\right]\left[{
        \frac{
   		\Gamma(r+\frac{1}{2})
   		\Gamma(q-r+\frac{3}{2})
   	    }{c^{r}}
    }\right]
    = \left(\frac{q+\frac{1}{2}-r}{\frac{1}{2}+r}\right)c.
  $$
  After rescaling, we have the first $q$ terms of the 
  Gaussian hypergeometric 
  function $\hgeo{2}{1}(-q-\frac{1}{2},1;\frac{1}{2};-c)$ by noting 
  that
  \begin{align*}
    \hgeo{2}{1}\left(-q-\frac{1}{2},1;\frac{1}{2};-c\right) &=
    1 + \sum_{r=1}^{q} 
    \frac{\sqrt{\pi}\Gamma(q+\frac{3}{2})c^{r}}{
    	\Gamma(r+\frac{1}{2})
    	\Gamma(q-r+\frac{3}{2})
    } +
    \sum_{r>q} \frac{(-q-\frac{1}{2})_r}{(\frac{1}{2})_r}(-c)^r\\
    &=
    1 + \sum_{r=1}^{q} 
    \frac{\sqrt{\pi}\Gamma(q+\frac{3}{2})c^{r}}{
    	\Gamma(r+\frac{1}{2})
    	\Gamma(q-r+\frac{3}{2})
    } +
    c^{q+1}
    \hgeo{2}{1}\left(\frac{1}{2},1;q+\frac{1}{2};-c\right)
  \end{align*}
  where $(n)_r = n(n+1)\ldots(n+r-1)$ is the Pochhammer symbol
  or rising factorial.  Rearranging the above terms gives
  \begin{multline}
    \label{eqn:hgeoDiff}
    \frac{\Gamma(q+1)}{\sqrt{c}}
    \sum_{r=1}^{q} 
    \frac{c^{r}}{
    	\Gamma(r+\frac{1}{2})
    	\Gamma(q-r+\frac{3}{2})
    } \\= 
    \frac{\Gamma(q+1)}{\sqrt{c\pi}\Gamma(q+\frac{3}{2})}\left\{
      \hgeo{2}{1}\left(-q-\frac{1}{2},1;\frac{1}{2};-c\right) -
      c^{q+1}
      \hgeo{2}{1}\left(\frac{1}{2},1;q+\frac{1}{2};-c\right) -
      1
    \right\}
  \end{multline}
  
  For the first hypergeometric function in Equation~\ref{eqn:hgeoDiff},
  we apply a linear transformation formula 
  \citep[Eqn 15.3.8]{ABRAMOWITZ_STEGUN}, upper bounding the second
  term below by setting $1/(1+c)$ to $1$, and then 
  using Gauss' summation formula to get
  \begin{align*}
    \hgeo{2}{1}\left(
      -q-\frac{1}{2}, 1; \frac{1}{2}; -c
    \right)
    &= 
    (1+c)^{q-1/2}\frac{\Gamma(\frac{1}{2})\Gamma(q+\frac{3}{2})
    }{\Gamma(1)\Gamma(q+1)}
    \hgeo{2}{1}\left(
      -q-\frac{1}{2}, -\frac{1}{2}; -q-\frac{1}{2}; \frac{1}{1+c}
    \right)\\
    &~~+
    (1+c)^{-1}\frac{\Gamma(\frac{1}{2})\Gamma(-q-\frac{3}{2})
    }{\Gamma(-q-\frac{1}{2})\Gamma(-\frac{1}{2})}
    \hgeo{2}{1}\left(
    1, q+1; q+\frac{5}{2}; \frac{1}{1+c}
    \right)\\
    &\le (1+c)^{q-1/2}\frac{\sqrt{\pi}\Gamma(q+\frac{3}{2})
    }{\Gamma(q+1)}
    \hgeo{1}{0}\left( -\frac{1}{2}; \frac{1}{1+c}\right) +
    \frac{
    	\hgeo{2}{1}(1, q+1; q+\frac{5}{2}; 1)
    }{2(1+c)(q+\frac{3}{2})}\\  
    &= (1+c)^{q-1/2}\frac{\sqrt{\pi}\Gamma(q+\frac{3}{2})
    }{\Gamma(q+1)}
    \sqrt{1-\frac{1}{1+c}}
    +
    \frac{
      \frac{\Gamma(q+\frac{5}{2})\Gamma(\frac{1}{2})}{
      	\Gamma(q+\frac{3}{2})\Gamma(\frac{3}{2})
      }
    }{2(1+c)(q+\frac{3}{2})}\\
    &= (1+c)^{q}\frac{\sqrt{c\pi}\Gamma(q+\frac{3}{2})
    }{\Gamma(q+1)}
    +
    \frac{ 1 }{1+c}.
  \end{align*}
  
  For the second hypergeometric function in Equation~\ref{eqn:hgeoDiff},
  we apply the Pfaff transform and then Euler's integral transform to 
  get
  \begin{align*}
    \hgeo{2}{1}\left(\frac{1}{2},1;q+\frac{1}{2};-c\right)
    &= (1+c)^{-1/2}\hgeo{2}{1}\left(
      \frac{1}{2}, q-\frac{1}{2} ; q+\frac{1}{2};\frac{c}{c+1}
    \right)\\
    &= (1+c)^{-1/2} \frac{\Gamma(q+\frac{1}{2})}{\Gamma(q-\frac{1}{2})}
      \int_{0}^{1} t^{q-3/2}\left(1-\frac{ct}{c+1}\right)^{-1/2}dt\\
    &\le (1+c)^{-1/2} \frac{\Gamma(q+\frac{1}{2})}{\Gamma(q-\frac{1}{2})}
      \int_{0}^{1} t^{q-3/2}dt\left(1-\frac{c}{c+1}\right)^{-1/2}
    = 1.      
  \end{align*}

  Putting the above bounds into Equation~\ref{eqn:hgeoDiff} and
  upper bounding with Gautschi’s inequality
  we get the desired result:
  \begin{align*}
  \sum_{r=1}^{q} 
  \frac{\Gamma(q+1)c^{r-1/2}}{
  	\Gamma(r+\frac{1}{2})
  	\Gamma(q-r+\frac{3}{2})
  } &\le 
  \frac{\Gamma(q+1)}{\sqrt{c\pi}\Gamma(q+\frac{3}{2})}\left\{
  (1+c)^{q}\frac{\sqrt{c\pi}\Gamma(q+\frac{3}{2})
  }{\Gamma(q+1)}
  +
  \frac{ 1 }{1+c} -
  c^{q+1} - 1
  \right\}\\
  &= (1+c)^q - 
  \frac{\Gamma(q+1)}{\sqrt{c\pi}\Gamma(q+\frac{3}{2})}
  \left\{
    \frac{c}{1+c} + c^{q+1}
  \right\}\\
  &\le (1+c)^q - \frac{\sqrt{q+1}}{q+\frac{1}{2}}
  \frac{1}{\sqrt{\pi}}
  \left\{
  \frac{\sqrt{c}}{1+c} + c^{q+1/2}
  \right\}\\
  &\le (1+c)^q - 
  \frac{3}{\sqrt{2\pi}}
  \sqrt{
  	\frac{1}{q+\frac{1}{2}} + \frac{1/2}{(q+\frac{1}{2})^2}
  }.
  \end{align*}
\end{proof}

\begin{lemma}
	\label{lem:multiSum}
	For $p,n\ge1$, let $c_1,\ldots,c_n\in\real^+$.  Then,
	$$
	\sum_{k_1+\ldots+k_n=2p}
	\frac{\Gamma(p+1)}{\Gamma(k_1/2+1)\ldots\Gamma(k_n/2+1)}
	\prod_{i=1}^nc_i^{k_i/2}
	\le
	2^{n-1}
	(c_1+\ldots+c_n)^p
	$$
	where the sum is taken over all integer 
	compositions of $2p$.
\end{lemma}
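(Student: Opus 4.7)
The plan is to convert the multi-index sum into a Gaussian integral over the positive orthant and then use a short symmetry argument to recover the $2^{n-1}$ constant in place of the naive $2^n$. The entire argument rests on a single ``half-gamma'' identity: since $\int_0^\infty u^k e^{-u^2}\,du = \Gamma((k+1)/2)/2$, Legendre's duplication formula $\Gamma((k+1)/2)\Gamma(k/2+1) = 2^{-k}\sqrt{\pi}\,k!$ rearranges into
$$
\frac{1}{\Gamma(k/2+1)} = \frac{2^{k+1}}{\sqrt{\pi}\,k!}\int_0^\infty u^k e^{-u^2}\,du,
$$
which has the virtue of carrying both the dependence on $k$ and an extra $1/k!$ that will soon combine into a multinomial coefficient.

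First, I would substitute this identity into each $1/\Gamma(k_i/2+1)$ factor on the left-hand side and swap the $n$-fold sum with the $n$-fold integral. Writing $\sqrt{c}=(\sqrt{c_1},\ldots,\sqrt{c_n})$, the surviving sum collapses by the multinomial theorem,
$$
\sum_{k_1+\cdots+k_n=2p} \frac{1}{\prod k_i!}\prod (\sqrt{c_i}\,u_i)^{k_i} \;=\; \frac{\langle u,\sqrt{c}\rangle^{2p}}{(2p)!}.
$$
Collecting constants and invoking $p!\,4^p/(2p)! = \sqrt{\pi}/\Gamma(p+1/2)$ (duplication again) produces the clean integral representation
$$
\sum_{k_1+\cdots+k_n=2p} \frac{\Gamma(p+1)\prod c_i^{k_i/2}}{\prod \Gamma(k_i/2+1)} \;=\; \frac{2^n}{\pi^{(n-1)/2}\Gamma(p+1/2)}\int_{[0,\infty)^n} e^{-\|u\|^2}\langle u,\sqrt{c}\rangle^{2p}\,du.
$$

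The remaining step is a two-line symmetry argument. Because the exponent $2p$ is even, the substitution $u\mapsto -u$ shows that the integrals over the positive and negative orthants are equal; since those orthants are disjoint subsets of $\mathbb{R}^n$ on which the integrand is non-negative,
$$
2\int_{[0,\infty)^n} e^{-\|u\|^2}\langle u,\sqrt{c}\rangle^{2p}\,du \;\le\; \int_{\mathbb{R}^n} e^{-\|u\|^2}\langle u,\sqrt{c}\rangle^{2p}\,du \;=\; \|c\|_1^{p}\,\Gamma(p+1/2)\,\pi^{(n-1)/2},
$$
where the final equality comes from rotating coordinates so that $\sqrt{c}$ lies along the first axis and reducing to the one-dimensional Gaussian moment $\int_{\mathbb{R}} u^{2p} e^{-u^2}\,du = \Gamma(p+1/2)$. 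Substituting back into the representation cancels every prefactor and yields exactly $2^{n-1}(c_1+\cdots+c_n)^p$.

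The main obstacle is finding the right integral representation in the first place. A direct induction on $n$ gets stuck because splitting on the parity of $k_n$ leaves an inner sum whose indices total an odd number, to which the inductive hypothesis does not apply. The dual generating-function viewpoint $\prod e^{c_i t^2}(1+\mathrm{erf}(\sqrt{c_i}\,t))$ combined with the pointwise bound $1+\mathrm{erf}\le 2$ only delivers $2^n$, because it treats each factor independently and misses the fact that restricting a Gaussian integral to a single orthant captures at most half of the full integral. The orthant-halving step above is precisely what converts that $2^n$ into the sharp $2^{n-1}$ claimed by the lemma.
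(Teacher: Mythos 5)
Your proof is correct, and it takes a genuinely different route from the paper's. The paper works entirely on the discrete simplex of compositions: it partitions the compositions of $2p$ into $2^{n-1}$ classes according to which entries are odd, invokes its Lemma~\ref{lem:halfBinom} (proved via a chain of hypergeometric transformations) to show the all-even class dominates each of the others term-by-term, and then notes that the all-even class sums exactly to $(c_1+\cdots+c_n)^p$ by the multinomial theorem. You instead linearize the problem: the duplication-formula identity $1/\Gamma(k/2+1)=\tfrac{2^{k+1}}{\sqrt{\pi}\,k!}\int_0^\infty u^k e^{-u^2}\,du$ turns the left-hand side into $\tfrac{2^n}{\pi^{(n-1)/2}\Gamma(p+1/2)}\int_{[0,\infty)^n}e^{-\|u\|^2}\langle u,\sqrt{c}\rangle^{2p}\,du$ (I verified the constants, including $p!\,4^p/(2p)!=\sqrt{\pi}/\Gamma(p+1/2)$), and the $u\mapsto -u$ symmetry plus nonnegativity of the integrand gives $2\int_{[0,\infty)^n}\le\int_{\mathbb{R}^n}$, which evaluates in closed form after rotating $\sqrt{c}$ onto the first axis. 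Every prefactor cancels and the $2^{n-1}$ emerges exactly. What your approach buys is substantial: it is self-contained, avoids Lemma~\ref{lem:halfBinom} and its hypergeometric machinery entirely, and makes transparent where the constant comes from (one factor of $2$ per half-line integral, minus one recovered by orthant symmetry); it also recovers equality at $n=1$. The paper's combinatorial decomposition, by contrast, exposes the finer structure of which compositions are being over-counted, which is arguably more informative about potential sharpening of the constant for specific parity classes, but at the cost of a much longer argument. The one point worth stating explicitly in a write-up is the justification for interchanging the finite sum with the $n$-fold integral (immediate here by nonnegativity/Tonelli, or simply because the sum is finite), and that the two orthants overlap only on a set of measure zero so their contributions add.
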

\begin{proof}
	We let $\Delta_{2p}^n$ denote the discrete simplex
	$$
	\Delta_{2p}^n = \left\{
	(k_1,\ldots,k_n)\in\natural^n\,|\,
	k_i\ge0\,\forall i\text{ and }
	k_1+\ldots+k_n=2p
	\right\}
	$$
	being the set of all $n$-long integer compositions of 
	$2p\in\natural$.  As $2p$ is an even integer, we can 
	consider even compositions $(k_1,\ldots,k_n)$ such that
	$\sum_{i=1}^n k_i=2p$ and $k_i\text{ mod }2=0$ for all 
	$i=1,\ldots,n$, and we denote $k_i = 2l_i$.
	From the multinomial theorem, summing over 
	only even compositions of $2p$ gives
	$$
	\sum_{2l_1+\ldots+2l_n=2p}
	\frac{\Gamma(p+1)}{\Gamma(l_1+1)\ldots\Gamma(l_n+1)}
	\prod_{i=1}^nc_i^{l_i}
	=
	(c_1+\ldots+c_n)^p.
	$$
	The collection of even compositions $(2l_1,\ldots,2l_n)$
	forms a discrete subsimplex of $\Delta_{2p}^n$ isomorphic 
	to $\Delta_p^n$. We further decompose the remaining 
	not-strictly-even  (NSE) compositions of $2p$ into 
	$\sum_{m=1}^{n/2}{n\choose2m} = 2^{n-1}-1$, 
	disjoint subsimplices by indicating which entries in 
	the composition are even and which are odd.	This
	summation
	follows directly from the identity 
	$\sum_{m=0}^n (-1)^m{n\choose m}=0$.
	Because $2p$ is even, none of the NSE subsplicies can 
	have cardinality more than 
	$\abs{\Delta_{p}^n} = {p+n-1\choose n-1}$.
	We will denote an NSE subsimplex as 
	$\tilde{\Delta}_{2p}^n(o)$ where $o = 0,2,4,\ldots$ is
	the number of odd entries in the composition. 
	Each subsimplex with exactly $o$ odd entries
	is isomorphic to the others via translation. Hence,
	without loss of generality, we choose the simplex 
	$\tilde{\Delta}_{2p}^n(o)$ to be the one with odd
	entries $k_1,\ldots,k_o$ and even entries 
	$k_{o+1},\ldots,k_n$.
	
	Noting that the subsimplex of even compositions is 
	$\tilde{\Delta}_{2p}^n(0)$,
	we first prove that
	\begin{multline*}
		(c_1+\ldots+c_n)^p = 
		\sum_{k\in\tilde{\Delta}_{2p}^n(0)}
		\frac{\Gamma(p+1)}{\Gamma(k_1/2+1)\ldots\Gamma(k_n/2+1)}
		\prod_{i=1}^nc_i^{k_i/2}\\
		\ge
		\sum_{k\in\tilde{\Delta}_{2p}^n(2)}
		\frac{\Gamma(p+1)}{\Gamma(k_1/2+1)\ldots\Gamma(k_n/2+1)}
		\prod_{i=1}^nc_i^{k_i/2}.
	\end{multline*}
	by summing along a ``row'' of $\Delta_{2p}^n$.
	We recall that entries $k_1$ and $k_2$
	are odd in $\tilde{\Delta}_{2p}^n(2)$.  
	By fixing 
	the remaining $k_3,\ldots,k_n$, denoting 
	$2q = 2p-k_3-\ldots-k_n$, and applying 
	Lemma~\ref{lem:halfBinom}, we note that 
	\begin{multline*}
		\frac{\Gamma(p+1)}{\Gamma(k_3/2+1)\ldots\Gamma(k_n/2+1)}
		\prod_{i=3}^nc_i^{k_i/2}
		\sum_{k_1+k_2=2q,\text{ even}}
		\frac{
			c_1^{k_1/2}c_2^{k_2/2}
		}{
			\Gamma(k_1/2+1)\Gamma(k_2/2+1)
		}\\
		\ge
		\frac{\Gamma(p+1)}{\Gamma(k_3/2+1)\ldots\Gamma(k_n/2+1)}
		\prod_{i=3}^nc_i^{k_i/2}
		\sum_{k_1+k_2=2q,\text{ odd}}
		\frac{
			c_1^{k_1/2}c_2^{k_2/2}
		}{
			\Gamma(k_1/2+1)\Gamma(k_2/2+1)
		}.
	\end{multline*}
	Applying this for every choice of $k_3,\ldots,k_n$
	demonstrates that the multinomial sum over all elements
	in $\tilde{\Delta}_{2p}^n(0)$ is greater or equal to
	the sum over all elements in any of the 
	$\tilde{\Delta}_{2p}^n(2)$.
	
	Repeating this argument shows that the multinomial
	sum over $\tilde{\Delta}_{2p}^n(o)$ is greater than
	or equal to the sum over $\tilde{\Delta}_{2p}^n(o+2)$.
	Denoting 
	$\zeta = \{ 
	\tilde{\Delta}_{2p}^n(2l)\,:\, l=1,\ldots,\lfloor n/2\rfloor
	\}$ to be the set of all subsimplices of $\Delta_{2p}^n$,
	we conclude that 
	\begin{multline*}
		\sum_{k_1+\ldots+k_n=2p}
		\frac{\Gamma(p+1)}{\Gamma(k_1/2+1)\ldots\Gamma(k_n/2+1)}
		\prod_{i=1}^nc_i^{k_i/2}\\
		= 
		\sum_{\Delta\in\zeta}\sum_{ {\bf k} \in \Delta}
		\frac{\Gamma(p+1)}{\Gamma(k_1/2+1)\ldots\Gamma(k_n/2+1)}
		\prod_{i=1}^nc_i^{k_i/2}\\
		\le
		\abs{\zeta}
		(c_1+\ldots+c_n)^p
		\le
		2^{n-1}
		(c_1+\ldots+c_n)^p
	\end{multline*}
	by noting that 
	$
	\abs{\zeta} = \sum_{l=0}^{\lfloor n/2\rfloor}{n\choose 2l}
	= 2^{n-1}.
	$
\end{proof}

\begin{proof}[Proof of Theorem~\ref{thm:globalGamma}]
  Beginning from the proof of Theorem~\ref{thm:localGamma}, we
  recall that we can  
  apply an affine transformation to write
  $
    \gamma_i(\pi_i) = \frac{m_i(n-m_i-1)}{n-1}\sum_{j=1}^n
    \delta_{i,j}\lmb_{i,\pi_i(j)}+m_i\bar{\lmb}_{-i}.
  $
  Thus, our global test statistic can be written as 
  $$
    \gamma(\boldsymbol{\pi}) = \sum_{i=1}^n\left[
    \frac{m_i(n-m_i-1)}{n-1}\sum_{j=1}^n
    \delta_{i,j}\lmb_{i,\pi_i(j)}\right] + \sum_{i=1}^n m_i\bar{\lmb}_{-i}.
  $$
  Inference based on the permutation test will not be affected 
  by the constant shift term $\sum_{i=1}^n m_i\bar{\lmb}_{-i}$.
  Hence, we can proceed by considering 
  $
    T(\boldsymbol{\pi}) = \sum_{i=1}^n \eta_i
    T_i(\pi_i)
  $
  for $\eta = [m_i(n-m_i-1)/(n-1)]$ and $T_i=\sum_{j=1}^n
  \delta_{i,j}\lmb_{i,\pi_i(j)}$.
  
  We recall that $\boldsymbol{\pi}=(\pi_1,\ldots,\pi_n)$ is
  such that $\pi_i$ and $\pi_j$ are independent random permutations 
  for $i\ne j$. Then, we  bound
  the $p$th moment of $T(\boldsymbol{\pi})$
  as follows:
  \begin{align*}
  	\xv T(\boldsymbol{\pi})^p
  	&= \sum_{k_1+\ldots+k_n=p}{p\choose k_1,\ldots,k_n}
  	\prod_{i=1}^n \eta_i^{k_i} \xv \left[T_i(\pi_i)^{k_i}\right]\\
  	&\le \sum_{k_1+\ldots+k_n=p}{p\choose k_1,\ldots,k_n}
  	\prod_{i=1}^n \eta_i^{k_i} 
  	\frac{k_i!(n-1)^{k_i}s_i^{k_i}}{2^{k_i/2}m_i^{3k_i/2}
  		\Gamma(\frac{k_i}{2}+1)}  \\
  	&= \frac{(n-1)^p}{2^{p/2}}
  	\sum_{k_1+\ldots+k_n=p}
  	\frac{\Gamma(p+1)}{\Gamma(\frac{k_1}{2}+1)\ldots\Gamma(\frac{k_n}{2}+1)}
  	\prod_{i=1}^n  
  	\left(\frac{\eta_i^{2}s_i^{2}}{m_i^{3}}\right)^{k_i/2}  \\
  	&= \left(\frac{n-1}{2^{1/2}}\right)^{p}
  	\frac{\Gamma(p+1)}{\Gamma(\frac{p}{2}+1)}
  	\sum_{k_1+\ldots+k_n=p}
  	\frac{
  	  \Gamma(\frac{p}{2}+1)
  	}{
  	  \Gamma(\frac{k_1}{2}+1)\ldots\Gamma(\frac{k_n}{2}+1)
    }
  	\prod_{i=1}^n  
  	\left(\frac{\eta_i^{2}s_i^{2}}{m_i^{3}}\right)^{k_i/2}  \\
  	&\le \left(\frac{n-1}{2^{1/2}}\right)^{p}
  	\frac{\Gamma(p+1)}{\Gamma(\frac{p}{2}+1)}
  	2^{n-1}\left(
  	  \sum_{i=1}^n\frac{\eta_i^{2}s_i^{2}}{m_i^{3}}
  	\right)^{p/2}
  \end{align*}
  where the first inequality comes from 
  Theorem A.4 of \citep{KASHLAK_KHINTCHINE2020} and
  the second inequality comes from Lemma~\ref{lem:multiSum}
  above.
  Symmetrizing the statistic $T$ with $\boldsymbol{\pi}'$
  is an iid copy of $\boldsymbol{\pi}$ and 
  application of Markov/Chernoff's inequality gives 
  \begin{align*}
    \prob{\abs{T(\boldsymbol{\pi})}>t}  
    &\le
    \inf_{\lmb>0}\ee^{-\lmb t}
    \xv\ee^{\lmb(T(\boldsymbol{\pi})-T(\boldsymbol{\pi}'))}
    \\
    &\le
    \inf_{\lmb>0}\ee^{-\lmb t}\left[
      1 + \sum_{p=1}^\infty \frac{\lmb^p}{p!}\xv
      (T(\boldsymbol{\pi})-T(\boldsymbol{\pi}'))^p
    \right]\\ 
    &\le
    \inf_{\lmb>0}\ee^{-\lmb t}\left[
    1 + \sum_{p=1}^\infty \frac{\lmb^{2p}2^{2p}}{{(2p)}!}\xv
    T(\boldsymbol{\pi})^{2p}
    \right]\\ 
    &\le
    \inf_{\lmb>0}\ee^{-\lmb t}\left[
    1 + \sum_{p=1}^\infty \frac{\lmb^{2p}2^{2p}}{{(2p)}!}
    \left(\frac{n-1}{2^{1/2}}\right)^{2p}
    \frac{(2p)!}{{p}!}
    2^{n-1}\left(
    \sum_{i=1}^n\frac{\eta_i^{2}s_i^{2}}{m_i^{3}}
    \right)^{p}
    \right]
    \\ 
    &\le
    \inf_{\lmb>0}\ee^{-\lmb t}\left[
    1 + \sum_{p=1}^\infty \frac{2^{n-1}}{p!}\left\{
      {2(n-1)^2\lmb^2}
    \right\}^p\left(
    \sum_{i=1}^n\frac{\eta_i^{2}s_i^{2}}{m_i^{3}}
    \right)^{p}
    \right] 
    \\ 
    &\le
    \inf_{\lmb>0}\ee^{-\lmb t}\left[
    1 + \sum_{p=1}^\infty \frac{(2^n\lmb^2)^p}{p!}
    \left(
    \sum_{i=1}^n\frac{(n-m_i-1)^2s_i^{2}}{m_i}
    \right)^{p}
    \right] 
    \\
    &\le
    \inf_{\lmb>0}\ee^{-\lmb t}\exp\left\{
      \left(
      {2^n\lmb^2}
      \right)
      \sum_{i=1}^n\frac{(n-m_i-1)^2s_i^{2}}{m_i}
    \right\}
    \\
    &\le
    \exp\left\{
      -\frac{t^2}{2^{n+2}}\left(
      \sum_{i=1}^n\frac{(n-m_i-1)^2s_i^{2}}{m_i}
      \right)^{-1}
    \right\}
    = \exp\left\{ -\frac{t^2}{2^{n+2}\varpi^2} \right\}
  \end{align*}
  where $\varpi^2=\sum_{i=1}^n\frac{(n-m_i-1)^2s_i^{2}}{m_i}$
  acts like a variance for this sub-Gaussian bound.
  
  Lastly, we modify the proof of Proposition~2.5 of 
  \cite{KASHLAK_KHINTCHINE2020} to improve that bound.
  We first note that 
  $\var{T_i(\pi)} = 
  s_i^2(\frac{1}{n-m_i-1} + \frac{1}{m_i})$.
  Recalling the defining of $\eta_i$, 
  we get that 
  \begin{multline*}
    \var{T(\pi)} = \sum_{i=1}^n s_i^2\eta_i^2\left(
      \frac{1}{n-m_i-1} + \frac{1}{m_i}
    \right)\\
    = \sum_{i=1}^n s_i^2\left(
      \frac{m_i(n-m_i-1)^2 + m_i^2(n-m_i-1)}{(n-1)^2} 
    \right)
    = \sum_{i=1}^n \eta_is_i^2 := \upsilon^2.
  \end{multline*}
  Thus, we apply the proof of Proposition~2.5 of
  \cite{KASHLAK_KHINTCHINE2020}
  to get
  \begin{equation}
    \label{eqn:incBeta}
    \prob{
      \ee^{ -\frac{T(\pi)^2}{2^{n+2}\varpi^2} } < u
    } \le C_0 I\left(
      u; 2^{n}\frac{\varpi^2}{\upsilon^2}
    ,\frac{1}{2}\right)
  \end{equation}
  where $I$ is the regularized incomplete beta function and
  $$
    C_0 = \frac{
      \left(
        2^{n}\frac{\varpi^2}{\upsilon^2}
      \right)^{1/2}
      \Gamma\left(2^{n}\frac{\varpi^2}{\upsilon^2}\right)
    }{
      \Gamma\left(\frac{1}{2} + 2^{n}\frac{\varpi^2}{\upsilon^2}\right)
    } \approx 1.
  $$
  However, the presence of $2^n$ in both the exponent and
  the beta parameter in Equation~\ref{eqn:incBeta} makes 
  this numerically impossible to compute for moderate to
  large sample sizes $n$.  Thus, we apply the asymptotic 
  formula for the incomplete beta function detailed in 
  \cite{DOMAN1996} to get a numerically stable 
  equation for the tail probability.
  
  In \cite{DOMAN1996}, 
  \begin{equation}
    \label{eqn:doman}
    I(x;a,b) \sim Q\left(-g \log x;b\right)
    + \frac{\Gamma(a+b)}{\Gamma(a)\Gamma(b)}x^{g}
    \sum_{k=0}^\infty T_k(b,x)/g^{k+1}
  \end{equation}
  where $Q$ is the upper regularized incomplete gamma function,
  $g = a + (b-1)/2$ and the $T_k$ are power series related to 
  the $\sinh()$ function and the Bernoulli polynomials.
  In our context, $g = 2^{n}\varpi^2/\upsilon^2 - 1/4\approx
  2^{n}\varpi^2/\upsilon^2$.  
  The first piece of Equation~\ref{eqn:doman}
  becomes
  $$
    Q(-g\log x;b) = Q\left(
      \left\{2^{n}\frac{\varpi^2}{\upsilon^2}-\frac{1}{4}\right\}
      \frac{T^2}{2^{n+2}\varpi^2}
      ;\frac{1}{2}
    \right) = 
    Q\left(
      \frac{T^2}{4\upsilon^2} - \frac{T^2}{2^{n+4}\varpi^2}
    ;\frac{1}{2}
    \right).
  $$
  The second term of the asymptotic expansion contains a few
  subparts.  First, we can use Stirling's approximation
  to show that for $a\rightarrow\infty$ with $b$ fixed
  \begin{align*}
    \frac{\Gamma(a+b)}{\Gamma(a)\Gamma(b)} 
    &\sim
    \frac{a^b}{\Gamma(b)} = 
    \frac{2^{n/2}\varpi}{\upsilon\sqrt{\pi}}
  \end{align*}
  We also have that $x^g = \exp( -T^2/2\upsilon^2 )$.
  For the final series term, we only consider the 
  first term (n=0) as $g^{k+1}\sim 2^{nk}$ making 
  the remainder negligible.  The result is
  $$
    \sum_{k=0}^\infty \frac{T_k(b,x)}{g^{k+1}}
    \sim 
    \frac{
      T_0(1/2,\ee^{-T^2/2^{n+2}\varpi^2})
    }{
      2^{n}\varpi^2/\upsilon^2
    }
    \sim 
    \frac{
    	\abs{T}^3/2^{3n/2}\varpi^3
    }{
    	2^{n}\varpi^2/\upsilon^2
    }
    = \frac{
      \upsilon^2\abs{T}^3
    }{
      2^{5n/2}\varpi^5
    }
  $$
  Combining all three pieces gives the expression
  $$
    \frac{2^{n/2}\varpi}{\upsilon\sqrt{\pi}}
    \ee^{-T^2/2\upsilon^2}
    \frac{
    	\upsilon^2\abs{T}^3
    }{
    	2^{5n/2}\varpi^5
    }
    = \frac{1}{\sqrt{\pi}}
    \frac{
    	\upsilon\abs{T}^3
    }{
    	2^{2n}\varpi^4
    }
    \ee^{-T^2/2\upsilon^2}
  $$
  Combining with the first part of the asymptotic expansion,
  we conclude that 
  $$
    \prob{
    	\ee^{ -\frac{T(\pi)^2}{2^n\varpi^2} } < u
    }
    \sim
    Q\left(
    \frac{T^2}{4\upsilon^2}
    ;\frac{1}{2}
    \right)
    +
   \frac{1}{\sqrt{\pi}}
   \frac{
   	\upsilon\abs{T}^3
   }{
   	2^{2n}\varpi^4
   }
   \ee^{-T^2/2\upsilon^2}
  $$
  and finally that
  $$
  \prob{ 
  	\abs*{\gamma(\boldsymbol{\pi})- \sum_{i=1}^n m_i\bar{\lmb}_{-i}} \ge \gamma  }
  \sim
  Q\left(
  \frac{\gamma^2}{4\upsilon^2}
  ;\frac{1}{2}
  \right)
  +
  O(2^{-2n})
  $$
\end{proof}

\section{Comparison with the Gaussian Approximation}

Figure~\ref{fig:pvCompGauss} further reprises the analysis
displayed in Figure~\ref{fig:pvComp} but compares the 
computation-based permutation test to the Gaussian approximation.
The Gaussian approximation is not valid for the Alberta electoral
dataset as can be seen by the wild disagreement between these
two methods.  For Moran's statistic with 1-NN and 2-NN weight matrices,
the Gaussian approach appears prone to overstating the significance
of the local autocorrelation whereas for the 3-NN weight matrix, 
it understates the significance of many ridings.
For Geary's statistic, there is little agreement between the 
permutation p-values and the Gaussian p-values.  Albeit, this 
departure has already been noted in \cite{ANSELIN2019,SEYAS2020}
who recommend the permutation test for Geary's statistic.

\begin{figure}
	\begin{center}
		\includegraphics[width=0.45\textwidth]{\PICDIR/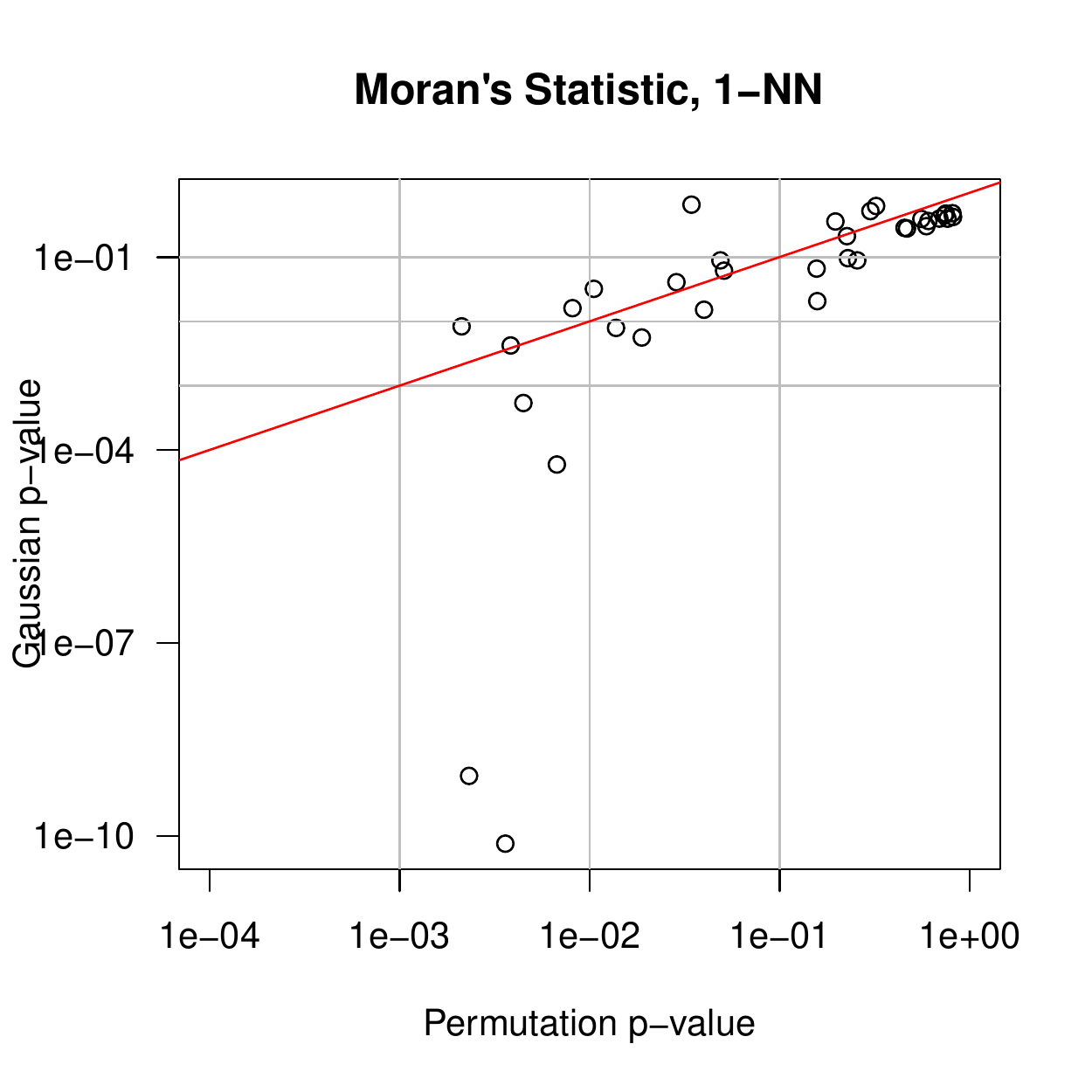}
		\includegraphics[width=0.45\textwidth]{\PICDIR/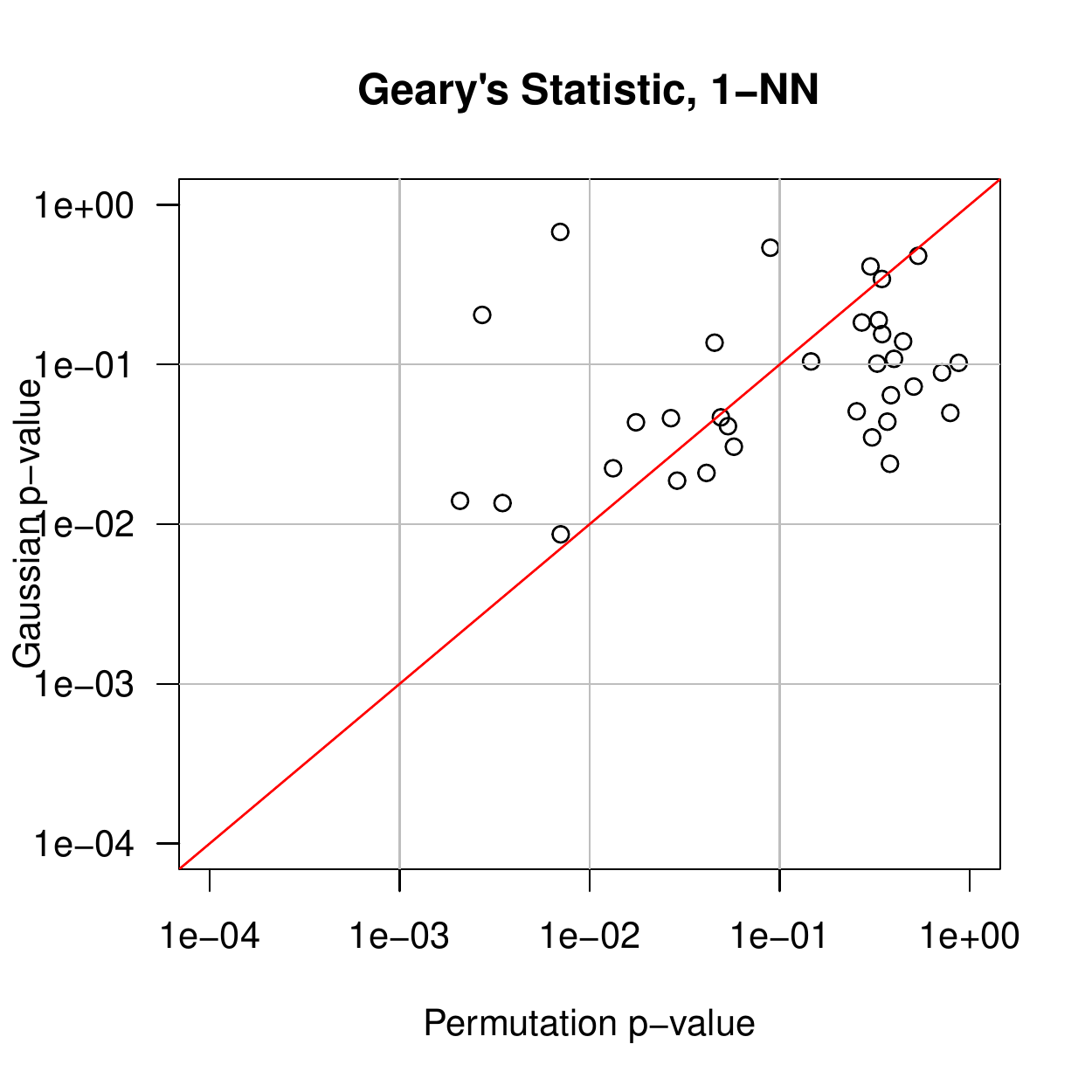}
		\includegraphics[width=0.45\textwidth]{\PICDIR/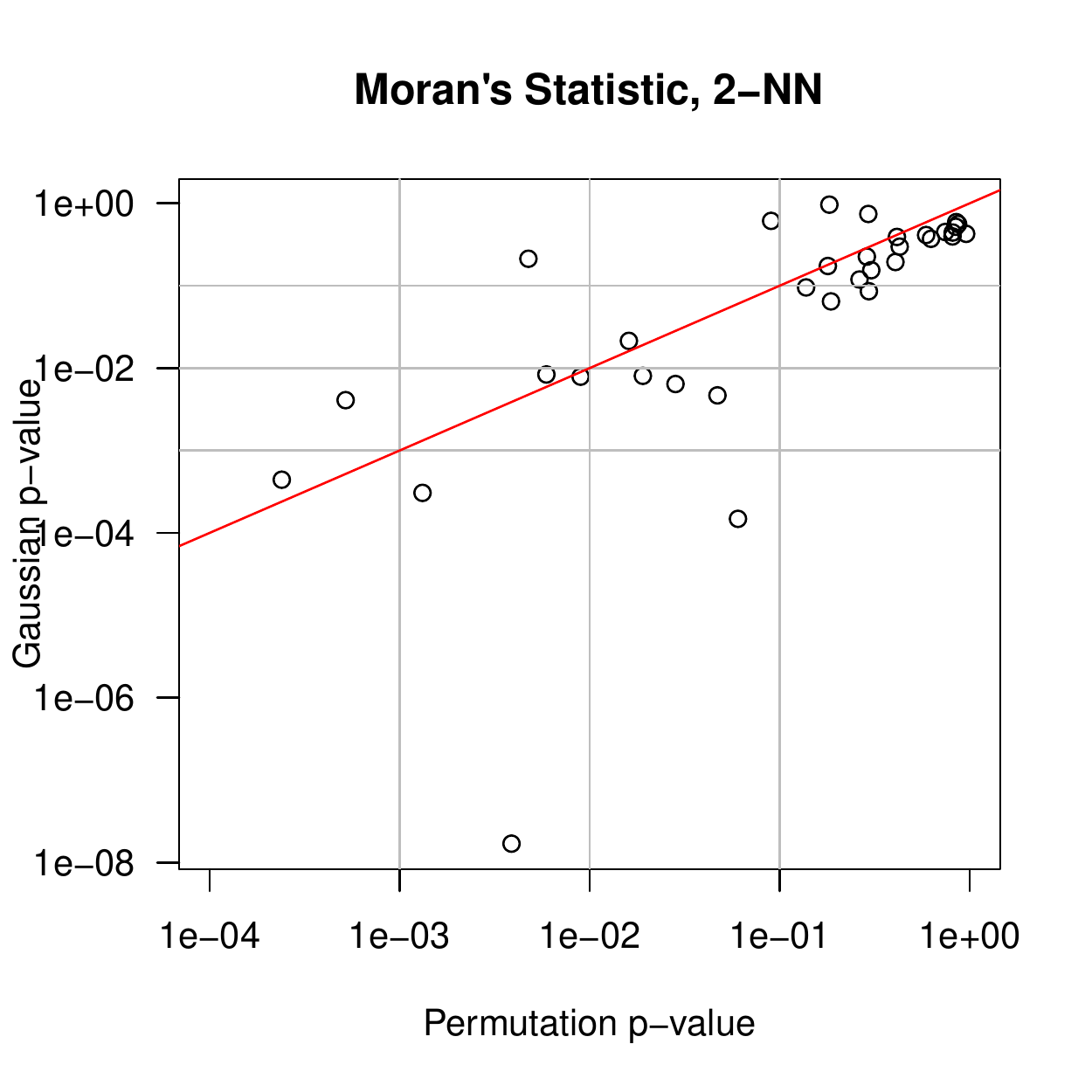}
		\includegraphics[width=0.45\textwidth]{\PICDIR/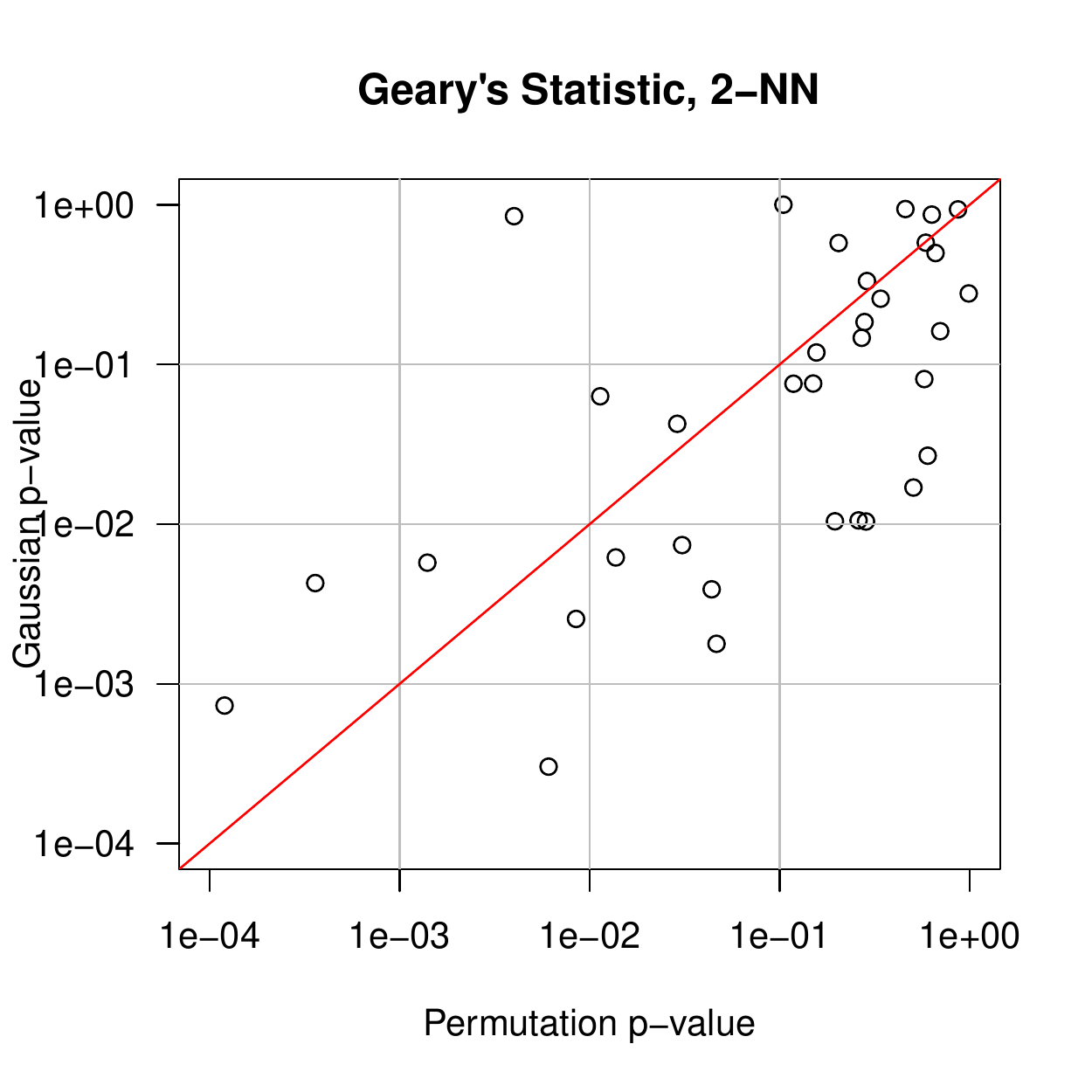}
		\includegraphics[width=0.45\textwidth]{\PICDIR/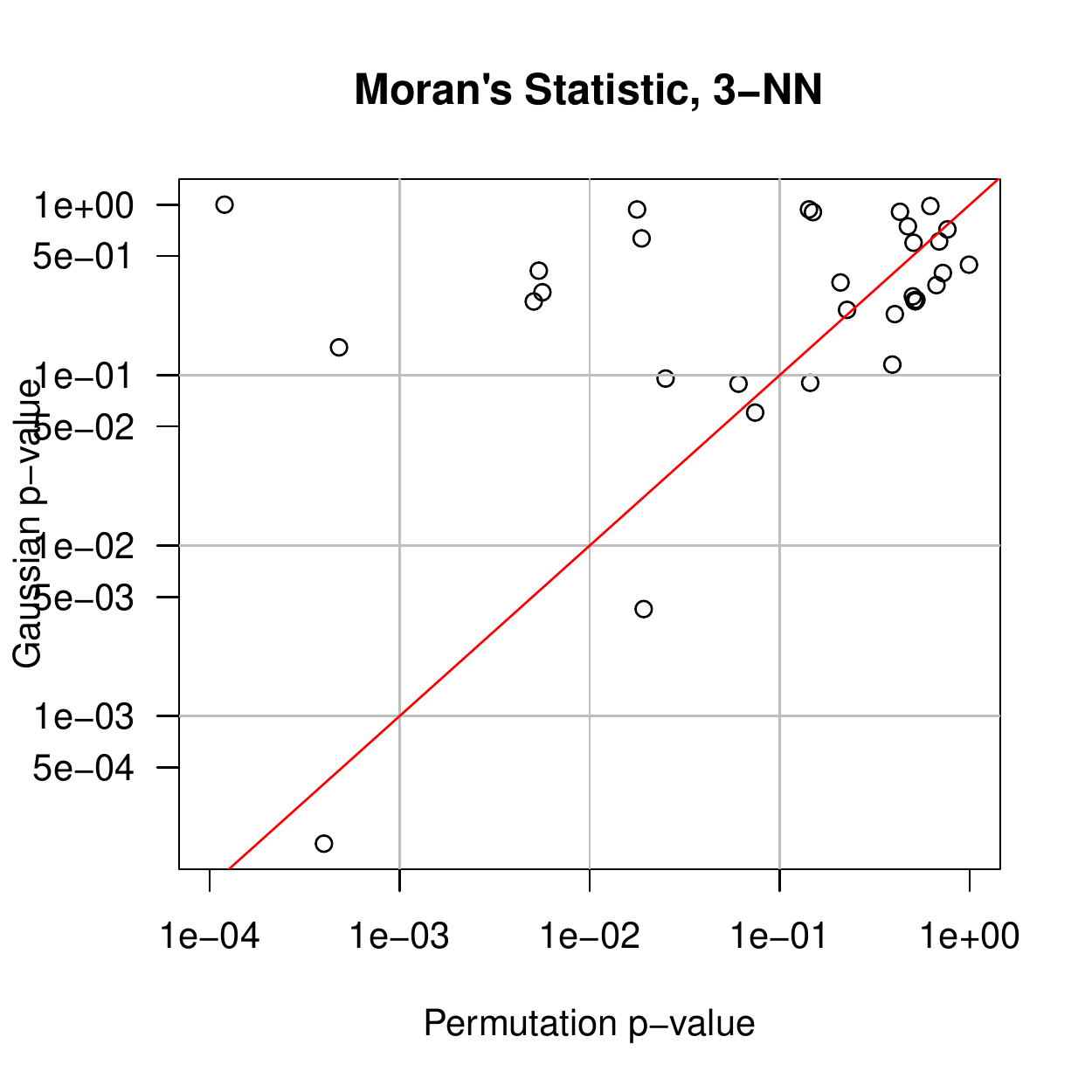}
		\includegraphics[width=0.45\textwidth]{\PICDIR/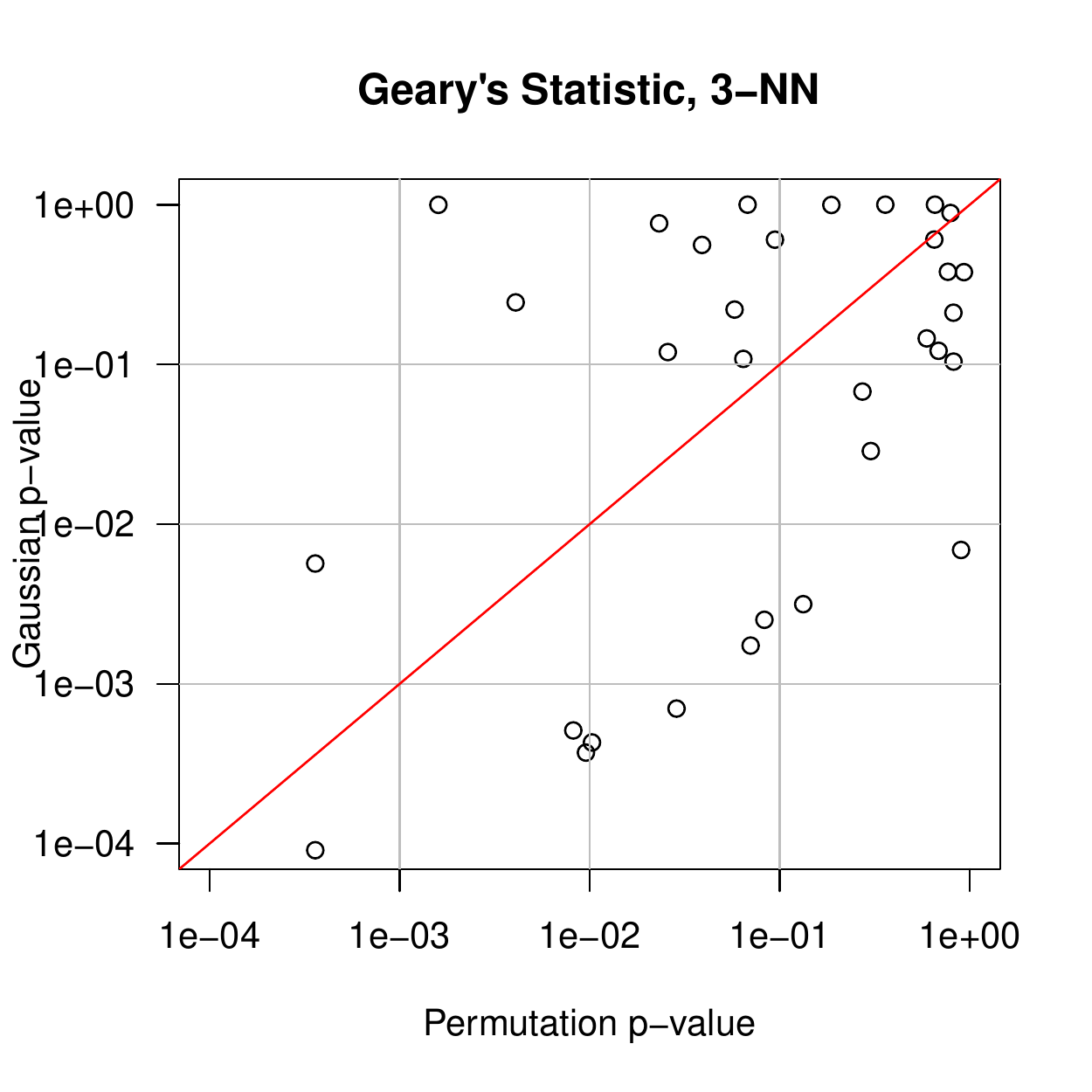}
	\end{center}
	\caption{
		\label{fig:pvCompGauss}
		A comparison of the p-values produced by a simulation-based 
		permutation test with 50,000 permutations per vertex
		and the p-values produced by assuming the test statistic
		follows a Gaussian distribution.  The left column considers
		Moran's statistic; the right column considers Geary's
		statistic.  The three rows from top to bottom consider
		the 1, 2, and 3-nearest neighbours weight matrix, respectively.
	}
\end{figure}

\end{document}